\newcommand{\safemath}[2]{\newcommand{#1}{\ensuremath{#2}\xspace}}
\safemath{\bma}{\mathbf{a}}
\safemath{\bmb}{\mathbf{b}}
\safemath{\bmc}{\mathbf{c}}
\safemath{\bmd}{\mathbf{d}}
\safemath{\bme}{\mathbf{e}}
\safemath{\bmf}{\mathbf{f}}
\safemath{\bmg}{\mathbf{g}}
\safemath{\bmh}{\mathbf{h}}
\safemath{\bmi}{\mathbf{i}}
\safemath{\bmj}{\mathbf{j}}
\safemath{\bmk}{\mathbf{k}}
\safemath{\bml}{\mathbf{l}}
\safemath{\bmm}{\mathbf{m}}
\safemath{\bmn}{\mathbf{n}}
\safemath{\bmo}{\mathbf{o}}
\safemath{\bmp}{\mathbf{p}}
\safemath{\bmq}{\mathbf{q}}
\safemath{\bmr}{\mathbf{r}}
\safemath{\bms}{\mathbf{s}}
\safemath{\bmt}{\mathbf{t}}
\safemath{\bmu}{\mathbf{u}}
\safemath{\bmv}{\mathbf{v}}
\safemath{\bmw}{\mathbf{w}}
\safemath{\bmx}{\mathbf{x}}
\safemath{\bmy}{\mathbf{y}}
\safemath{\bmz}{\mathbf{z}}
\safemath{\bmzero}{\mathbf{0}}
\safemath{\bmone}{\mathbf{1}}
\bmdefine{\biad}{a}
\bmdefine{\bibd}{b}
\bmdefine{\bicd}{c}
\bmdefine{\bidd}{d}
\bmdefine{\bied}{e}
\bmdefine{\bifd}{f}
\bmdefine{\bigd}{g}
\bmdefine{\bihd}{h}
\bmdefine{\biid}{i}
\bmdefine{\bijd}{j}
\bmdefine{\bikd}{k}
\bmdefine{\bild}{l}
\bmdefine{\bimd}{m}
\bmdefine{\bind}{n}
\bmdefine{\biod}{o}
\bmdefine{\bipd}{p}
\bmdefine{\biqd}{q}
\bmdefine{\bird}{r}
\bmdefine{\bisd}{s}
\bmdefine{\bitd}{t}
\bmdefine{\biud}{u}
\bmdefine{\bivd}{v}
\bmdefine{\biwd}{w}
\bmdefine{\bixd}{x}
\bmdefine{\biyd}{y}
\bmdefine{\bizd}{z}
\bmdefine{\bixid}{\xi}
\bmdefine{\bilambdad}{\lambda}
\bmdefine{\bimud}{\mu}
\bmdefine{\bithetad}{\theta}
\bmdefine{\biphid}{\phi}
\bmdefine{\bideltad}{\delta}
\safemath{\bmia}{\biad}
\safemath{\bmib}{\bibd}
\safemath{\bmic}{\bicd}
\safemath{\bmid}{\bidd}
\safemath{\bmie}{\bied}
\safemath{\bmif}{\bifd}
\safemath{\bmig}{\bigd}
\safemath{\bmih}{\bihd}
\safemath{\bmii}{\biid}
\safemath{\bmij}{\bijd}
\safemath{\bmik}{\bikd}
\safemath{\bmil}{\bild}
\safemath{\bmim}{\bimd}
\safemath{\bmin}{\bind}
\safemath{\bmio}{\biod}
\safemath{\bmip}{\bipd}
\safemath{\bmiq}{\biqd}
\safemath{\bmir}{\bird}
\safemath{\bmis}{\bisd}
\safemath{\bmit}{\bitd}
\safemath{\bmiu}{\biud}
\safemath{\bmiv}{\bivd}
\safemath{\bmiw}{\biwd}
\safemath{\bmix}{\bixd}
\safemath{\bmiy}{\biyd}
\safemath{\bmiz}{\bizd}
\safemath{\bmxi}{\bixid}
\safemath{\bmlambda}{\bilambdad}
\safemath{\bmmu}{\bimud}
\safemath{\bmtheta}{\bithetad}
\safemath{\bmphi}{\biphid}
\safemath{\bmdelta}{\bideltad}
\safemath{\bA}{\mathbf{A}}
\safemath{\bB}{\mathbf{B}}
\safemath{\bC}{\mathbf{C}}
\safemath{\bD}{\mathbf{D}}
\safemath{\bE}{\mathbf{E}}
\safemath{\bF}{\mathbf{F}}
\safemath{\bG}{\mathbf{G}}
\safemath{\bH}{\mathbf{H}}
\safemath{\bI}{\mathbf{I}}
\safemath{\bJ}{\mathbf{J}}
\safemath{\bK}{\mathbf{K}}
\safemath{\bL}{\mathbf{L}}
\safemath{\bM}{\mathbf{M}}
\safemath{\bN}{\mathbf{N}}
\safemath{\bO}{\mathbf{O}}
\safemath{\bP}{\mathbf{P}}
\safemath{\bQ}{\mathbf{Q}}
\safemath{\bR}{\mathbf{R}}
\safemath{\bS}{\mathbf{S}}
\safemath{\bT}{\mathbf{T}}
\safemath{\bU}{\mathbf{U}}
\safemath{\bV}{\mathbf{V}}
\safemath{\bW}{\mathbf{W}}
\safemath{\bX}{\mathbf{X}}
\safemath{\bY}{\mathbf{Y}}
\safemath{\bZ}{\mathbf{Z}}
\safemath{\bZero}{\mathbf{0}}
\safemath{\bOne}{\mathbf{1}}
\safemath{\bDelta}{\mathbf{\Delta}}
\safemath{\bLambda}{\mathbf{\UpLambda}}
\safemath{\bPhi}{\mathbf{\Upphi}}
\safemath{\bSigma}{\mathbf{\Upsigma}}
\safemath{\bOmega}{\mathbf{\Upomega}}
\safemath{\bTheta}{\mathbf{\Uptheta}}
\bmdefine{\biAd}{A}
\bmdefine{\biBd}{B}
\bmdefine{\biCd}{C}
\bmdefine{\biDd}{D}
\bmdefine{\biEd}{E}
\bmdefine{\biFd}{F}
\bmdefine{\biGd}{G}
\bmdefine{\biHd}{H}
\bmdefine{\biId}{I}
\bmdefine{\biJd}{J}
\bmdefine{\biKd}{K}
\bmdefine{\biLd}{L}
\bmdefine{\biMd}{M}
\bmdefine{\biOd}{N}
\bmdefine{\biPd}{O}
\bmdefine{\biQd}{P}
\bmdefine{\biRd}{R}
\bmdefine{\biSd}{S}
\bmdefine{\biTd}{T}
\bmdefine{\biUd}{U}
\bmdefine{\biVd}{V}
\bmdefine{\biWd}{W}
\bmdefine{\biXd}{X}
\bmdefine{\biYd}{Y}
\bmdefine{\biZd}{Z}
\bmdefine{\biDelta}{\Delta}
\bmdefine{\biLambda}{\Lambda}
\bmdefine{\biPhi}{\Phi}
\bmdefine{\biSigma}{\Sigma}
\bmdefine{\biOmega}{\Omega}
\bmdefine{\biTheta}{\Theta}
\safemath{\bimA}{\biAd}
\safemath{\bimB}{\biBd}
\safemath{\bimC}{\biCd}
\safemath{\bimD}{\biDd}
\safemath{\bimE}{\biEd}
\safemath{\bimF}{\biFd}
\safemath{\bimG}{\biGd}
\safemath{\bimH}{\biHd}
\safemath{\bimI}{\biId}
\safemath{\bimJ}{\biJd}
\safemath{\bimK}{\biKd}
\safemath{\bimL}{\biLd}
\safemath{\bimM}{\biMd}
\safemath{\bimN}{\biNd}
\safemath{\bimO}{\biOd}
\safemath{\bimP}{\biPd}
\safemath{\bimQ}{\biQd}
\safemath{\bimR}{\biRd}
\safemath{\bimS}{\biSd}
\safemath{\bimT}{\biTd}
\safemath{\bimU}{\biUd}
\safemath{\bimV}{\biVd}
\safemath{\bimW}{\biWd}
\safemath{\bimX}{\biXd}
\safemath{\bimY}{\biYd}
\safemath{\bimZ}{\biZd}
\safemath{\bimDelta}{\biDelta}
\safemath{\bimLambda}{\biLambda}
\safemath{\bimPhi}{\biPhi}
\safemath{\bimSigma}{\biSigma}
\safemath{\bimOmega}{\biOmega}
\safemath{\bimTheta}{\biTheta}
\safemath{\setA}{\mathcal{A}}
\safemath{\setB}{\mathcal{B}}
\safemath{\setC}{\mathcal{C}}
\safemath{\setD}{\mathcal{D}}
\safemath{\setE}{\mathcal{E}}
\safemath{\setF}{\mathcal{F}}
\safemath{\setG}{\mathcal{G}}
\safemath{\setH}{\mathcal{H}}
\safemath{\setI}{\mathcal{I}}
\safemath{\setJ}{\mathcal{J}}
\safemath{\setK}{\mathcal{K}}
\safemath{\setL}{\mathcal{L}}
\safemath{\setM}{\mathcal{M}}
\safemath{\setN}{\mathcal{N}}
\safemath{\setO}{\mathcal{O}}
\safemath{\setP}{\mathcal{P}}
\safemath{\setQ}{\mathcal{Q}}
\safemath{\setR}{\mathcal{R}}
\safemath{\setS}{\mathcal{S}}
\safemath{\setT}{\mathcal{T}}
\safemath{\setU}{\mathcal{U}}
\safemath{\setV}{\mathcal{V}}
\safemath{\setW}{\mathcal{W}}
\safemath{\setX}{\mathcal{X}}
\safemath{\setY}{\mathcal{Y}}
\safemath{\setZ}{\mathcal{Z}}
\safemath{\emptySet}{\varnothing}
\safemath{\colA}{\mathscr{A}}
\safemath{\colB}{\mathscr{B}}
\safemath{\colC}{\mathscr{C}}
\safemath{\colD}{\mathscr{D}}
\safemath{\colE}{\mathscr{E}}
\safemath{\colF}{\mathscr{F}}
\safemath{\colG}{\mathscr{G}}
\safemath{\colH}{\mathscr{H}}
\safemath{\colI}{\mathscr{I}}
\safemath{\colJ}{\mathscr{J}}
\safemath{\colK}{\mathscr{K}}
\safemath{\colL}{\mathscr{L}}
\safemath{\colM}{\mathscr{M}}
\safemath{\colN}{\mathscr{N}}
\safemath{\colO}{\mathscr{O}}
\safemath{\colP}{\mathscr{P}}
\safemath{\colQ}{\mathscr{Q}}
\safemath{\colR}{\mathscr{R}}
\safemath{\colS}{\mathscr{S}}
\safemath{\colT}{\mathscr{T}}
\safemath{\colU}{\mathscr{U}}
\safemath{\colV}{\mathscr{V}}
\safemath{\colW}{\mathscr{W}}
\safemath{\colX}{\mathscr{X}}
\safemath{\colY}{\mathscr{Y}}
\safemath{\colZ}{\mathscr{Z}}
\safemath{\opA}{\mathbb{A}}
\safemath{\opB}{\mathbb{B}}
\safemath{\opC}{\mathbb{C}}
\safemath{\opD}{\mathbb{D}}
\safemath{\opE}{\mathbb{E}}
\safemath{\opF}{\mathbb{F}}
\safemath{\opG}{\mathbb{G}}
\safemath{\opH}{\mathbb{H}}
\safemath{\opI}{\mathbb{I}}
\safemath{\opJ}{\mathbb{J}}
\safemath{\opK}{\mathbb{K}}
\safemath{\opL}{\mathbb{L}}
\safemath{\opM}{\mathbb{M}}
\safemath{\opN}{\mathbb{N}}
\safemath{\opO}{\mathbb{O}}
\safemath{\opP}{\mathbb{P}}
\safemath{\opQ}{\mathbb{Q}}
\safemath{\opR}{\mathbb{R}}
\safemath{\opS}{\mathbb{S}}
\safemath{\opT}{\mathbb{T}}
\safemath{\opU}{\mathbb{U}}
\safemath{\opV}{\mathbb{V}}
\safemath{\opW}{\mathbb{W}}
\safemath{\opX}{\mathbb{X}}
\safemath{\opY}{\mathbb{Y}}
\safemath{\opZ}{\mathbb{Z}}
\safemath{\opZero}{\mathbb{O}}
\safemath{\identityop}{\opI}
\safemath{\veca}{\bma}
\safemath{\vecb}{\bmb}
\safemath{\vecc}{\bmc}
\safemath{\vecd}{\bmd}
\safemath{\vece}{\bme}
\safemath{\vecf}{\bmf}
\safemath{\vecg}{\bmg}
\safemath{\vech}{\bmh}
\safemath{\veci}{\bmi}
\safemath{\vecj}{\bmj}
\safemath{\veck}{\bmk}
\safemath{\vecl}{\bml}
\safemath{\vecm}{\bmm}
\safemath{\vecn}{\bmn}
\safemath{\veco}{\bmo}
\safemath{\vecp}{\bmp}
\safemath{\vecq}{\bmq}
\safemath{\vecr}{\bmr}
\safemath{\vecs}{\bms}
\safemath{\vect}{\bmt}
\safemath{\vecu}{\bmu}
\safemath{\vecv}{\bmv}
\safemath{\vecw}{\bmw}
\safemath{\vecx}{\bmx}
\safemath{\vecy}{\bmy}
\safemath{\vecz}{\bmz}
\safemath{\veczero}{\bmzero}
\safemath{\vecone}{\bmone}
\safemath{\vecxi}{\bmxi}
\safemath{\veclambda}{\bmlambda}
\safemath{\vecmu}{\bmmu}
\safemath{\vectheta}{\bmtheta}
\safemath{\vecphi}{\bmphi}
\safemath{\vecdelta}{\bmdelta}
\safemath{\matA}{\bA}
\safemath{\matB}{\bB}
\safemath{\matC}{\bC}
\safemath{\matD}{\bD}
\safemath{\matE}{\bE}
\safemath{\matF}{\bF}
\safemath{\matG}{\bG}
\safemath{\matH}{\bH}
\safemath{\matI}{\bI}
\safemath{\matJ}{\bJ}
\safemath{\matK}{\bK}
\safemath{\matL}{\bL}
\safemath{\matM}{\bM}
\safemath{\matN}{\bN}
\safemath{\matO}{\bO}
\safemath{\matP}{\bP}
\safemath{\matQ}{\bQ}
\safemath{\matR}{\bR}
\safemath{\matS}{\bS}
\safemath{\matT}{\bT}
\safemath{\matU}{\bU}
\safemath{\matV}{\bV}
\safemath{\matW}{\bW}
\safemath{\matX}{\bX}
\safemath{\matY}{\bY}
\safemath{\matZ}{\bZ}
\safemath{\matzero}{\bmzero}
\safemath{\matDelta}{\bDelta}
\safemath{\matLambda}{\bLambda}
\safemath{\matPhi}{\bPhi}
\safemath{\matSigma}{\bSigma}
\safemath{\matOmega}{\bOmega}
\safemath{\matTheta}{\bTheta}
\safemath{\matidentity}{\matI}
\safemath{\matone}{\matO}
\safemath{\rnda}{A}
\safemath{\rndb}{B}
\safemath{\rndc}{C}
\safemath{\rndd}{D}
\safemath{\rnde}{E}
\safemath{\rndf}{F}
\safemath{\rndg}{G}
\safemath{\rndh}{H}
\safemath{\rndi}{I}
\safemath{\rndj}{J}
\safemath{\rndk}{K}
\safemath{\rndl}{L}
\safemath{\rndm}{M}
\safemath{\rndn}{N}
\safemath{\rndo}{O}
\safemath{\rndp}{P}
\safemath{\rndq}{Q}
\safemath{\rndr}{R}
\safemath{\rnds}{S}
\safemath{\rndt}{T}
\safemath{\rndu}{U}
\safemath{\rndv}{V}
\safemath{\rndw}{W}
\safemath{\rndx}{X}
\safemath{\rndy}{Y}
\safemath{\rndz}{Z}
\safemath{\rveca}{\bimA}
\safemath{\rvecb}{\bimB}
\safemath{\rvecc}{\bimC}
\safemath{\rvecd}{\bimD}
\safemath{\rvece}{\bimE}
\safemath{\rvecf}{\bimF}
\safemath{\rvecg}{\bimG}
\safemath{\rvech}{\bimH}
\safemath{\rveci}{\bimI}
\safemath{\rvecj}{\bimJ}
\safemath{\rveck}{\bimK}
\safemath{\rvecl}{\bimL}
\safemath{\rvecm}{\bimM}
\safemath{\rvecn}{\bimN}
\safemath{\rveco}{\bomO}
\safemath{\rvecp}{\bimP}
\safemath{\rvecq}{\bimQ}
\safemath{\rvecr}{\bimR}
\safemath{\rvecs}{\bimS}
\safemath{\rvect}{\bimT}
\safemath{\rvecu}{\bimU}
\safemath{\rvecv}{\bimV}
\safemath{\rvecw}{\bimW}
\safemath{\rvecx}{\bimX}
\safemath{\rvecy}{\bimY}
\safemath{\rvecz}{\bimZ}
\safemath{\rvecxi}{\bmxi}
\safemath{\rveclambda}{\bmlambda}
\safemath{\rvecmu}{\bmmu}
\safemath{\rvectheta}{\bmtheta}
\safemath{\rvecphi}{\bmphi}
\safemath{\rmatA}{\bimA}
\safemath{\rmatB}{\bimB}
\safemath{\rmatC}{\bimC}
\safemath{\rmatD}{\bimD}
\safemath{\rmatE}{\bimE}
\safemath{\rmatF}{\bimF}
\safemath{\rmatG}{\bimG}
\safemath{\rmatH}{\bimH}
\safemath{\rmatI}{\bimI}
\safemath{\rmatJ}{\bimJ}
\safemath{\rmatK}{\bimK}
\safemath{\rmatL}{\bimL}
\safemath{\rmatM}{\bimM}
\safemath{\rmatN}{\bimN}
\safemath{\rmatO}{\bimO}
\safemath{\rmatP}{\bimP}
\safemath{\rmatQ}{\bimQ}
\safemath{\rmatR}{\bimR}
\safemath{\rmatS}{\bimS}
\safemath{\rmatT}{\bimT}
\safemath{\rmatU}{\bimU}
\safemath{\rmatV}{\bimV}
\safemath{\rmatW}{\bimW}
\safemath{\rmatX}{\bimX}
\safemath{\rmatY}{\bimY}
\safemath{\rmatZ}{\bimZ}
\safemath{\rmatDelta}{\bimDelta}
\safemath{\rmatLambda}{\bimLambda}
\safemath{\rmatPhi}{\bimPhi}
\safemath{\rmatSigma}{\bimSigma}
\safemath{\rmatOmega}{\bimOmega}
\safemath{\rmatTheta}{\bimTheta}
\newenvironment{textbmatrix}{	\setlength{\arraycolsep}{2.5pt}%
								\big[\begin{matrix}}{\end{matrix}\big]%
								\raisebox{0.08ex}{\vphantom{M}}}
\def\be{\begin{equation}}
\def\ee{\end{equation}}
\def\een{\nonumber \end{equation}}
\def\mat{\begin{bmatrix}}
\def\emat{\end{bmatrix}}
\def\btm{\begin{textbmatrix}}
\def\etm{\end{textbmatrix}}
\def\ba#1\ea{\begin{align}#1\end{align}}
\def\bas#1\eas{\begin{align*}#1\end{align*}}
\def\bs#1\es{\begin{split}#1\end{split}}
\def\bg#1\eg{\begin{gather}#1\end{gather}}
\def\bml#1\eml{\begin{multline}#1\end{multline}}
\def\bi#1\ei{\begin{itemize}#1\end{itemize}}
\newcommand{\lefto}{\mathopen{}\left}
\DeclareMathOperator{\tr}{tr}				
\DeclareMathOperator{\Exop}{\opE}			
\newcommand{\Ex}[2]{\ensuremath{\Exop_{#1}\lefto[#2\right]}} 	
\newcommand{\tp}[1]{\ensuremath{#1^{T}}} 		
\newcommand{\herm}[1]{\ensuremath{#1^{H}}} 	
\newcommand{\inv}[1]{\ensuremath{#1^{-1}}} 	
\safemath{\dirac}{\delta}					
\safemath{\krond}{\dirac}					
\safemath{\upto}{\uparrow}
\safemath{\downto}{\downarrow}
\safemath{\iu}{j}							
\safemath{\ev}{\lambda}						
\safemath{\hilseqspace}{l^{2}}				
\newcommand{\banachfunspace}[1]{\setL^{#1}}	
\safemath{\hilfunspace}{\banachfunspace{2}}	
\newcommand{\floor}[1]{\lfloor #1 \rfloor}
\safemath{\SNR}{\textit{SNR}} 				
\safemath{\PAR}{\textit{PAR}} 				
\safemath{\No}{N_0}							
\safemath{\Es}{E_s}							
\safemath{\Eb}{E_b}							
\safemath{\EbNo}{\frac{\Eb}{\No}}
\safemath{\EsNo}{\frac{\Es}{\No}}
\DeclareMathOperator{\CHop}{\ensuremath{\opH}} 
\safemath{\tvir}{\rndh_{\CHop}}				
\safemath{\tvtf}{\rndl_{\CHop}}				
\safemath{\spf}{\rnds_{\CHop}}				
\safemath{\bff}{H_{\CHop}}					
\safemath{\ircf}{r_{h}}						
\safemath{\tftvcf}{r_{s}}					
\safemath{\tfcf}{r_{l}}						
\safemath{\bfcf}{r_{H}}						
\safemath{\tcorr}{c_h}						
\safemath{\scf}{c_{s}}						
\safemath{\tfcorr}{c_{l}}					
\safemath{\fcorr}{c_{H}}						
\safemath{\mi}{I}							
\safemath{\capacity}{C}						
\safemath{\normal}{\mathcal{N}}			
\safemath{\jpg}{\mathcal{CN}}			
\safemath{\mchain}{\leftrightarrow}		
\safemath{\dB}{\,\mathrm{dB}}
\safemath{\dBm}{\,\mathrm{dBm}}
\safemath{\Hz}{\,\mathrm{Hz}}
\safemath{\kHz}{\,\mathrm{kHz}}
\safemath{\MHz}{\,\mathrm{MHz}}
\safemath{\GHz}{\,\mathrm{GHz}}
\safemath{\s}{\,\mathrm{s}}
\safemath{\ms}{\,\mathrm{ms}}
\safemath{\mus}{\,\mathrm{\text{\textmu}s}}
\safemath{\ns}{\,\mathrm{ns}}
\safemath{\ps}{\,\mathrm{ps}}
\safemath{\meter}{\,\mathrm{m}}
\safemath{\mm}{\,\mathrm{mm}}
\safemath{\cm}{\,\mathrm{cm}}
\safemath{\m}{\,\mathrm{m}}
\safemath{\W}{\,\mathrm{W}}
\safemath{\mW}{\, \mathrm{mW}}
\safemath{\J}{\,\mathrm{J}}
\safemath{\K}{\,\mathrm{K}}
\safemath{\bit}{\,\mathrm{bit}}
\safemath{\nat}{\,\mathrm{nat}}
\safemath{\define}{\triangleq}			
\safemath{\equivalent}{\sim}
\safemath{\distas}{\sim}					
\safemath{\sdiff}{\Delta}				
\safemath{\reals}{\mathbb{R}}
\safemath{\positivereals}{\reals_{+}}
\safemath{\integers}{\mathbb{Z}}
\safemath{\posint}{\integers_{+}}
\safemath{\naturals}{\mathbb{N}}
\safemath{\posnaturals}{\naturals_{+}}
\safemath{\complexset}{\mathbb{C}}
\safemath{\rationals}{\mathbb{Q}}
\newcommand*{\fancyrefapplabelprefix}{app}		
\newcommand*{\fancyrefthmlabelprefix}{thm}		
\newcommand*{\fancyreflemlabelprefix}{lem}		
\newcommand*{\fancyrefcorlabelprefix}{cor}		
\newcommand*{\fancyrefdeflabelprefix}{def}		
\newcommand*{\fancyrefproplabelprefix}{prop}		
\newcommand*{\fancyrefexmpllabelprefix}{exmpl}
\newcommand*{\fancyrefalglabelprefix}{alg}		
\newcommand*{\fancyreftbllabelprefix}{tbl}		
 \newtheorem{thm}{Theorem}
 \newtheorem{lem}[thm]{Lemma}
\safemath{\dictab}{[\,\dicta\,\,\dictb\,]}
\safemath{\ysig}{\bmy}
\safemath{\ysighat}{\hat{\ysig}}
\safemath{\ysigdim}{M}
\safemath{\xsig}{\bmx}
\safemath{\xsigdim}{N}
\safemath{\nx}{n_x}
\safemath{\zsig}{\bmz}
\safemath{\zsigdim}{\ysigdim}
\safemath{\rsig}{\bmr}
\safemath{\Adict}{\bA}
\safemath{\Adicttilde}{\widetilde{\Adict}}
\safemath{\Adictdim}{\outputdim\times\xsigdim}
\safemath{\avec}{\bma}
\safemath{\avectilde}{\tilde{\avec}}
\safemath{\Bdict}{\bB}
\safemath{\Bdicttilde}{\widetilde{\Bdict}}
\safemath{\Cdict}{\bC}
\safemath{\cvec}{\bmc}
\safemath{\Ddict}{\bD}
\safemath{\Ddictdim}{\ysigdim\times\xsigdim}
\safemath{\dvec}{\bmd}
\safemath{\Ddicttilde}{\widetilde{\bD}}
\safemath{\Bonb}{\bB}
\safemath{\bvec}{\bmb}
\safemath{\Bonbdim}{\ysigdim\times\ysigdim}
\safemath{\noise}{\bmn}
\safemath{\noisedim}{\ysigim}
\safemath{\err}{\bme}
\safemath{\errdim}{\ysigdim}
\safemath{\errset}{\setE}
\safemath{\nerr}{n_e}
\safemath{\delop}{\bP_\errset}
\safemath{\delopc}{\bP_{{\errset}^c}}
\safemath{\cplxi}{\imath}
\safemath{\cplxj}{\jmath}
\safemath{\dict}{\matD}
\safemath{\inputdim}{N}		
\safemath{\outputdim}{M}		
\safemath{\sparsity}{S}	
\safemath{\inputdimA}{{N_a}}	
\safemath{\inputdimB}{{N_b}}	
\safemath{\elemA}{{n_a}}	
\safemath{\elemB}{{n_b}}	
\safemath{\resA}{\matR_a}	
\safemath{\resB}{\matR_b}	
\safemath{\subD}{\matS} 
\safemath{\subA}{\matS_a} 
\safemath{\subB}{\matS_b} 
\safemath{\dicta}{\matA} 	
\safemath{\dictb}{\matB} 	
\safemath{\hollowS}{H}
\safemath{\hollowA}{H_a}
\safemath{\hollowB}{H_b}
\safemath{\cross}{Z}
\safemath{\coh}{\mu_d}			
\safemath{\coha}{\mu_a}			
\safemath{\cohb}{\mu_b}			
\safemath{\mubs}{\nu}	
\safemath{\cohm}{\mu_m} 
\safemath{\dictset}{\setD}	
\safemath{\dictsetp}{\dictset(\coh,\coha,\cohb)}	
\safemath{\dictsetgen}{\dictset_\text{gen}}
\safemath{\dictsetgenp}{\dictsetgen(\coh)}
\safemath{\dictsetonb}{\dictset_\text{onb}}
\safemath{\dictsetonbp}{\dictsetonb(\coh)}
\safemath{\leftside}{U}
\safemath{\rightsideA}{R_a}
\safemath{\rightsideB}{R_b}
\safemath{\indexS}{\setI_S} 
\safemath{\na}{n_a}			
\safemath{\nb}{n_b}			
\safemath{\coeffa}{p_i}	
\safemath{\coeffb}{q_j}	
\safemath{\seta}{\setP}		
\safemath{\setb}{\setQ}     
\safemath{\setw}{\setW}	
\safemath{\setz}{\setZ}	
\safemath{\cola}{\veca}		
\safemath{\colb}{\vecb}		
\safemath{\cold}{\vecd}		
\safemath{\inputvec}{\vecx} 	
\safemath{\error}{\vece}	
\safemath{\noiseout}{\vecz} 	
\safemath{\inputvecel}{x}
\safemath{\inputveca}{\vecx_a}
\safemath{\inputvecb}{\vecx_b}
\safemath{\outputvec}{\vecy}	
\safemath{\lambdamin}{\lambda_{\mathrm{min}}}
\safemath{\elltwo}{\ell_2}
\safemath{\ellone}{\ell_1}
\safemath{\ellzero}{\ell_0}
\safemath{\ellinf}{\ell_\infty}
\safemath{\ellinftilde}{\ell_{\widetilde\infty}}
\safemath{\licard}{Z(\coh,\coha,\cohb)}
\safemath{\xsol}{\hat{x}}
\safemath{\xbord}{x_b}		
\safemath{\xstat}{x_s}		
\safemath{\xstatLone}{\tilde{x}_s}
\safemath{\order}{\mathcal{O}} 
\safemath{\scales}{\Theta} 
\safemath{\ones}{\mathbf{1}} 
\safemath{\zeroes}{\mathbf{0}} 
\safemath{\thlone}{\kappa(\coh,\cohb)} 
\safemath{\constoneA}{\delta} 
\safemath{\constoneB}{\epsilon} 
\safemath{\nlarge}{L}				   
\safemath{\sumlarge}{S_\nlarge}
\safemath{\maxlarger}{P_\nlarge}	   
\safemath{\Pzero}{\textrm{P0}}	
\safemath{\Pone}{\textrm{P1}}
\safemath{\vecfir}{\vecw}			 
\safemath{\vecsec}{\vecz}
\safemath{\elvecfir}{w}              
\safemath{\elvecsec}{z}				 
\safemath{\nlargefir}{n}
\safemath{\normout}{\gamma}
\safemath{\auxfun}{h}
\safemath{\supp}{\textrm{supp}}
\safemath{\indexa}{\ell}
\safemath{\indexb}{r}
\safemath{\indexc}{i}
\safemath{\indexd}{j}
\safemath{\project}{P}
\safemath{\LAMA}{\textrm{LAMA}}
\safemath{\MRT}{\textrm{MRT}}
\safemath{\betamax}{\beta^\text{max}_\setO}
\safemath{\betamaxno}{\beta^\text{max}}
\safemath{\betamin}{\beta^\text{min}_\setO}
\safemath{\betaminno}{\beta^\text{min}}
\safemath{\Nomin}{\No^\textnormal{min}(\beta)}
\safemath{\Nominnobeta}{\No^\text{min}}
\safemath{\Nomax}{\No^\textnormal{max}(\beta)}
\safemath{\Nomaxnobeta}{\No^\textnormal{max}}
\safemath{\EX}{E_\textnormal{x}}
\safemath{\EXP}{\EX^\textnormal{p}}
\safemath{\Eo}{E_0}
\safemath{\tmax}{{t_\textnormal{max}}}
\safemath{\MAP}{\textrm{MAP}}
\safemath{\IO}{\textrm{IO}}
\safemath{\JO}{\textrm{JO}}
\safemath{\Nopost}{N_{0}^\textnormal{post}}
\safemath{\MT}{U}
\safemath{\MR}{B}
\safemath{\Tran}{\textnormal{T}}
\safemath{\Herm}{\textnormal{H}}
\safemath{\row}{\textnormal{r}}
\safemath{\col}{\textnormal{c}}
\safemath{\NT}{N_\textnormal{T}}
\safemath{\DSNR}{\delta \textnormal{SNR}}
\safemath{\betaMOR}{\beta^{\star}}
\safemath{\Hj}{\bmj}
\safemath{\sj}{w}
\safemath{\Ej}{E_w}
\safemath{\quant}{Q}
\safemath{\compquant}{\mathcal{Q}}
\safemath{\Hest}{\hat{\bH}_{\text{est}}}
\safemath{\EIf}{\boldsymbol{\Lambda}}
\safemath{\rmmse}{\textit{RMSSE}}
\safemath{\Wsnips}{\bW_{\text{SNIPS}}}
\safemath{\Wchops}{\bW_{\text{CHOPS}}}
\safemath{\Pest}{\bP_{\text{est}}}
\safemath{\rproj}{\bmr_{\text{proj}}}
\safemath{\Hproj}{\hat{\bH}_{\text{proj}}}
\begin{document}
	
\title{Jammer Mitigation via Beam-Slicing for \\ Low-Resolution mmWave Massive MU-MIMO}%
\author{\IEEEauthorblockN{Gian Marti$^\ast$, Oscar Casta\~neda$^\ast$, and Christoph Studer}  
\thanks{A conference version of this paper introducing beam-slicing and SNIPS has been presented at the IEEE International Workshop on Signal Processing Systems (SiPS) 2021~\cite{castaneda21sips}. The present manuscript extends our work in~\cite{castaneda21sips} by proposing CHOPS and comparing it to SNIPS,  analyzing the performance of SNIPS and CHOPS for non-line-of-sight channels, and  performing ablation studies that explain design choices when deploying beam-slicing.}
\thanks{$^\ast$GM and OC contributed equally to this work.}
\thanks{GM, OC, and CS are with the Department of Information Technology
and Electrical Engineering, ETH Z\"urich, Switzerland; e-mail: gimarti@ethz.ch, caoscar@ethz.ch, and studer@ethz.ch}
\thanks{The work of OC and CS was supported in part by  ComSenTer, one of six centers in JUMP, a SRC program sponsored by DARPA. The work of CS was also supported by an ETH Research Grant and by the US National Science Foundation (NSF) under grants CNS-1717559 and ECCS-1824379.}
}

\IEEEtitleabstractindextext{\begin{abstract}
Millimeter-wave (mmWave) massive multi-user multiple-input multiple-output (MU-MIMO) promises unprecedented data rates for next-generation wireless systems.
To be practically viable, mmWave massive MU-MIMO basestations (BSs) must rely on low-resolution data converters which leaves them vulnerable to jammer interference.
This paper proposes beam-slicing, a method that mitigates the impact of a permanently transmitting jammer during uplink transmission for BSs equipped with low-resolution analog-to-digital converters (ADCs).
Beam-slicing is a localized analog spatial transform that focuses the jammer energy onto few ADCs, so that the transmitted data can be recovered based on the outputs of the interference-free ADCs.
We demonstrate the efficacy of beam-slicing in combination with two digital jammer-mitigating data detectors: SNIPS and CHOPS.
Soft-Nulling of Interferers with Partitions in Space (SNIPS) combines beam-slicing with a soft-nulling data detector that exploits knowledge of the ADC contamination; projeCtion onto ortHOgonal complement with Partitions in Space (CHOPS) combines beam-slicing with a linear projection that removes all signal components co-linear to an estimate of the jammer channel.
Our results show that beam-slicing enables SNIPS and CHOPS to successfully serve 65\% of the user equipments (UEs) for scenarios in which their antenna-domain counterparts that lack beam-slicing are only able to serve 2\% of the UEs.
\end{abstract}

\begin{IEEEkeywords}
Millimeter wave (mmWave), massive multi-user multiple-input multiple-output (MU-MIMO), analog-to-digital converter (ADC), quantization, jammer mitigation, data detection.
\end{IEEEkeywords}
}

\maketitle
\section{Introduction} 
\label{sec:intro}

\IEEEPARstart{N}{ext}-generation wireless communication systems are expected to rely on the vast, unused bandwidth available at millimeter-wave (mmWave) frequencies in order to meet the ever-growing demand for higher data rates.
Communication at mmWave frequencies is characterized by a high path loss that can be compensated for with massive multiple-input multiple-output (MIMO) technology~\cite{rappaport15a}.
Besides providing the basestation (BS) with a high array gain, massive MIMO also enables multi-user (MU) communication~\cite{larsson14a}.

The deployment of a BS equipped with a large number of antennas and corresponding radio-frequency (RF) chains poses implementation challenges in terms of system costs, power consumption, and circuit complexity.
A potential solution that addresses these challenges is to use low-resolution data converters 
that (i) reduce power consumption of data conversion and (ii) relax the linearity and noise requirements of the RF chains, which, in turn, also translates into power consumption and circuit complexity savings~\cite{jacobsson17b, wang15a}.

Unfortunately, the use of low-resolution analog-to-digital converters (ADCs) leaves the BS vulnerable to jammers that could be introduced, for example, by a rogue user equipment (UE) or a malicious transmitter.
Previous works~\cite{yan14a,shen14a,kapetanovic13a,akhlaghpasand18a,vinogradova16a,zhao17a,do18a,akhlaghpasand20a,akhlaghpasand20b,bagherinejad21a} have analyzed the impact of different types of jamming attacks on massive MU-MIMO systems and proposed mitigation methods based on digital equalization.
However, none of these works take into consideration the compounding challenge of low-resolution data conversion: 
A jammer can either saturate low-resolution ADCs or (if gain-control is used) widen their quantization range, which inevitably drowns the useful signals in quantization noise. Both of these effects introduce distortions that are difficult to remove with subsequent digital processing.  

\begin{figure}[tbp]%
\subfigure[infinite-resolution ADCs]{
\hspace{-3mm}
\includegraphics[width=0.5\columnwidth]{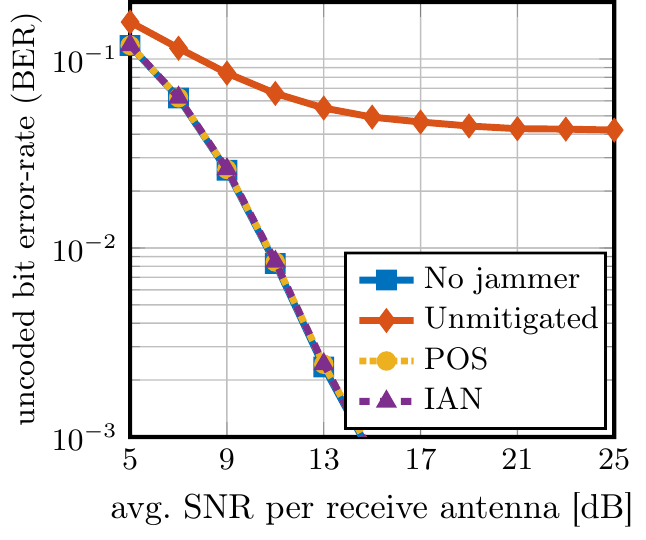}
\label{fig:highres_easy}
}
\hspace{-3mm}
\subfigure[4-bit-resolution ADCs]{
\includegraphics[width=0.5\columnwidth]{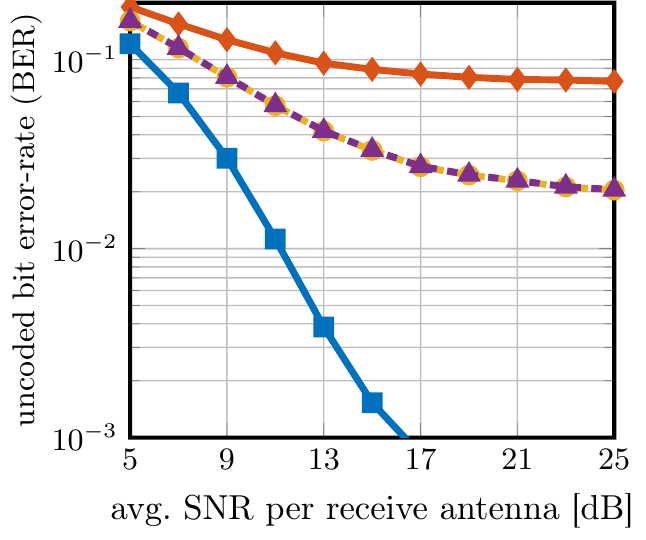}
\label{fig:lowres_hard}
\hspace{-3mm}
}
\caption{Comparing digital jammer mitigation for (a) infinite-resolution ADCs with (b) 4-bit ADCs. 
For infinite-resolution ADCs, the digital methods can remove the jammer with almost no performance loss, but not for low-resolution ADCs. The POS and IAN curves are virtually identical in either case.}
\label{fig:intro_plot}
\end{figure}

For illustration, we consider the efficacy of two classical methods for jammer mitigation: Projection onto the orthogonal subspace (POS) \cite{subbaram1993interference,yan14a}, and linear minimum mean-square error (LMMSE) equalization that treats jammer interference as noise (IAN)~\cite{madhow1994mmse}.
We consider the uncoded bit error-rate (BER) of these mitigation methods both for infinite-resolution ADCs and for 4-bit ADCs. 
Both of these settings consider 32 single-antenna UEs transmitting 16-QAM symbols to a BS equipped with 256 antennas  under line-of-sight (LoS) conditions, and we assume a permanently transmitting jammer with a power 25\,dB stronger than that of the average UE.
In \fref{fig:highres_easy}, we consider infinite-resolution ADCs, where we compare POS and IAN with unmitigated communication, as well as with a baseline case in which no jammer is present.
 All of these methods estimate the channel with least squares (LS) from an orthogonal pilot sequence  and then perform LMMSE equalization. 
In order to mitigate the effects of the jammer, POS projects~both the estimated channel matrix and the receive symbol vector onto the $(B\!-\!1)$-dimensional subspace orthogonal to the (genie-provided) jammer channel before performing LMMSE equalization. 
Contrastingly, IAN uses the (genie-provided) interference covariance matrix to equalize (or \mbox{``soft-null''}) the jammer directly in the LMMSE step, in which the jammer interference is treated as spatially correlated noise.
 We see that, at least when the jammer channel is perfectly known, both POS and IAN achieve almost perfect jammer removal, and the BER performance is virtually identical to the case without jammer.
In contrast, in \fref{fig:lowres_hard}, we consider the behavior when taking into account the quantization artifacts introduced by gain-controlled finite-resolution (4-bit) ADCs.\footnote{See Sections \ref{subsec:adc} and \ref{sec:bussgang} for the quantization procedure, which also uses Bussgang's decomposition to account for the quantization artifacts.} In this setting, both POS and IAN suffer an error floor as high as 2\% BER, even when furnished with \emph{perfect} knowledge of the jammer channel. Since such knowledge would not be obtainable with low-resolution ADCs, the true performance of these jammer mitigation methods is likely worse. The reason for this performance deterioration  is that the ADC inputs are dominated by jammer interference, causing their quantization range to widen up and drown the UE signals in quantization noise, which cannot be removed with linear digital processing.

\subsection{Contributions}
In this work, we develop a practical method that mitigates strong jamming attacks on mmWave massive MU-MIMO systems with low-resolution ADCs at the BS. 
We show that effective jammer mitigation with digital linear equalization is possible only when the ADC resolution is sufficiently high (e.g., 8 bits for a 25\,dB jammer), even when taking into account the nonlinear distortions caused by the ADCs.
However, practical deployments of massive MU-MIMO are likely to rely on \mbox{low-resolution} ADCs,
in which case the jammer will force the ADCs' quantization range to drown the UE signals in quantization noise.
In order to enable jammer-robust communication also with low-resolution ADCs, we propose a novel technique that we call \textit{beam-slicing}: a non-adaptive, localized spatial transform which precedes data conversion.
We then propose two different beam-slicing-based and \mbox{quantization-aware} methods for jammer mitigation, 
Soft-Nulling of Interferers with Partitions in Space (SNIPS) 
and 
projeCtion onto ortHOgonal complement with Partitions in Space (CHOPS), which are essentially the beam-slicing counterparts of IAN and POS from \fref{fig:intro_plot}.
We use simulation results for realistic LoS and non-LoS mmWave channels to demonstrate that beam-slicing enables SNIPS and CHOPS to mitigate the adversarial impact of strong jammers on low-resolution mmWave MU-MIMO systems far more effectively than conventional jammer mitigation schemes that directly operate in antenna-space.
Finally, we justify the design choices in our construction of beam-slicing through ablation studies.

\subsection{Related Prior Work} \label{sec:prior_work}

Several works have studied means to improve the resiliency of MIMO systems against jamming attacks.
These works have considered different attacks, such as constant jamming attacks~\cite{yan14a,shen14a}, in which the jammer is permanently transmitting, as well as other types of attacks in which the jammer transmits only at specific time instances, e.g.,  when the UEs transmit~\cite{yan14a}, or during pilot transmission~\cite{kapetanovic13a}.
Moreover, given the complexity of the jammer problem, some works have devoted themselves only to detecting the presence of a jammer~\cite{kapetanovic13a,akhlaghpasand18a}, while other works have proposed methods to suppress the impact of the detected jammer~\cite{yan14a,shen14a,vinogradova16a,zhao17a,do18a,akhlaghpasand20a,akhlaghpasand20b,bagherinejad21a}.

In this work, we focus on mitigating the interference of a permanently transmitting jammer. 
We now describe existing approaches that deal with such jammer interference.
Reference~\cite{yan14a} proposes a method  for small-scale MIMO systems that uses the angle-of-arrival of the jammer interference to project the receive vector onto its orthogonal subspace.
Also for small-scale MIMO, reference \cite{shen14a} proposes a method that uses differential encoding and exploits the ratio between channel coefficients.
In the context of  massive MIMO, reference \cite{vinogradova16a} uses random matrix theory to estimate the UEs' eigensubspace to then project the received signals onto that subspace, while~\cite{zhao17a} proposes methods which require perfect channel state information and cooperation between the UEs and the BS. 
References \cite{do18a,akhlaghpasand20a} use an estimate of the jammer channel to implement different versions of a jammer-robust zero-forcing detector.

Similarly to our work, references \cite{akhlaghpasand20b,bagherinejad21a} propose to exploit spatially correlated channels to suppress jammer interference.
In particular, the work in~\cite{bagherinejad21a} applies a \textit{beamspace} transform~\cite{brady13}, 
i.e., a spatial discrete Fourier transform (DFT), 
to the BS receive signal in order to separate the jammer from the UEs in angular domain. 
The beamspace transform  is closely related to the concept of beam-slicing proposed in this work.
In fact, the beamspace transform (and even the absence of any spatial transform, i.e., the \textit{antenna domain}) can be formulated as a special case of beam-slicing, and one can think of beam-slicing as a more general beamspace transform with adjustable angular resolution.
The key advantage of beam-slicing over the beamspace transform is that beam-slicing is composed of \textit{localized} transforms that only take inputs from a few adjacent antennas, making it more amenable for analog circuit implementation~\cite{JoshiThesis}.

The difficulty of implementing large analog spatial transforms (specifically, large DFTs) can be illustrated with the example of a Butler matrix.
A Butler matrix is a passive, bidirectional beamforming network consisting of hybrid couplers and fixed phase shifters, and is capable of simultaneously generating multiple beams for antenna arrays~\cite{butler61a,cetinoneri11a}.
To support multibeamforming, the Butler matrix implements a DFT\footnote{While Butler matrices with 180\textdegree~hybrid couplers implement a row-permuted DFT, Butler matrices with 90\textdegree~hybrid couplers implement a rotated DFT that lacks a beam at broadside~\cite{macnamara87a}.} in a structure analogous to that of the fast Fourier transform (FFT)~\cite{macnamara87a,guo21a}.
This efficient topology as well as the reduced component count (compared to the alternative Blass~\cite{blass60a} and Nolen~\cite{nolen65a} matrices) makes the Butler matrix one of the most prominent circuit-based, analog DFT implementations~\cite{guo21a,vallappil21a}.
Nevertheless, the implementation of large Butler matrices remains impractical. Specifically, their implementation is hindered by the more complex routing~\cite{vallappil21a,guo21a} and the necessity for lower manufacturing tolerances~\cite{hall90a}.
Another issue in implementing large Butler matrices is an increase in insertion loss~\cite{guo21a,vallappil21a,tudosie09a}.
These obstacles are reflected by the fact that, to the best of our knowledge, the largest Butler matrices reported in the open literature are for $16$ antenna elements~\cite{klionovski19a,yazdanbakhsh11a,corona03a}.
However, our proposed beam-slicing methods can be implemented with small Butler matrices that transform signals from $4$ or $8$ antennas.

Finally, we note that most existing works on jammer mitigation have not considered the effects of hardware impairments, with the exception of~\cite{akhlaghpasand20b}. 
Reference~\cite{akhlaghpasand20b} models the effects of hardware impairments (including quantization errors) as additive Gaussian noise, hence failing to model the signal- and jammer-dependent distortions introduced by coarse analog-to-digital (A/D) conversion.
In stark contrast, our system simulations explicitly model the effects of low-resolution quantization, whilst our beam-slicing methods, SNIPS and CHOPS, take into consideration such coarse quantization by using Bussgang's decomposition~\cite{bussgang52a,minkoff85a}.

\subsection{Notation}
Matrices and column vectors are represented by boldface uppercase and lowercase letters, respectively.
For a matrix~$\bA$, the conjugate transpose is $\bA^H$, the $k$th column is $\bma_k$, the Frobenius norm is $\| \bA \|_F$, and the trace is $\tr(\bA)$.
The $N\times N$ identity and DFT matrices are $\bI_N$ and $\bF_N$, respectively, where $\herm{\bF_N}\bF_N=\bI_N$.
For a vector $\bma$, the $k$th entry is $a_k$, the $\ell_2$-norm is $\|\bma\|_2$, the real part is $\Re\{\bma\}$, and the imaginary part is $\Im\{\bma\}$.
The all-zeros matrix is denoted by $\mathbf{0}$.
Moreover, $\text{diag}(\bma)$ is a diagonal matrix whose diagonal is formed by the entries of~$\bma$. 
Expectation with respect to the random vector~$\bmx$ is denoted by \Ex{\bmx}{\cdot}.
The floor function $\floor{x}$ returns the greatest integer less than or equal to $x$.
We define~\mbox{$i^2=-1$}.

\section{Propagation Model}\label{sec:system}

We consider the uplink of a mmWave massive MU-MIMO system in which $U$ single-antenna UEs transmit data to a $B$ antenna BS, while a permanently transmitting, single-antenna jammer interferes with the BS receive signal. 
For this scenario, we consider the following frequency-flat 
input-output relation:
 \begin{equation}
\bmy = \bH\bms + \Hj\sj + \bmn. \label{eq:ant_io}
 \end{equation}
Here, $\bmy\in\complexset^B$ is the (unquantized) vector received by the BS antennas, $\bH\in\complexset^{B\times U}$ models the MIMO uplink channel matrix, $\bms\in\setS^U$ is the transmit vector whose (independent) entries correspond to the per-UE transmit symbols which take value in a constellation set $\setS$ (e.g., $16$-QAM), $\Hj\in\complexset^B$ is the channel vector from the jammer to the BS, $\sj\in\complexset$ is the jamming signal, and $\bmn\in\complexset^B$ is i.i.d.\ circularly-symmetric complex Gaussian noise with a per-entry variance of $\No$.
In what follows, we assume that the UE transmit symbols $s_u$, $u=1,\dots,U$, are independent and circularly-symmetric with variance $\Es$ so that $\Ex{\bms}{\bms\herm{\bms}}=\Es\bI_U$.
We model the jamming signal $\sj$ as circularly-symmetric complex Gaussian with variance $\Ej$. 
All probabilistic quantities are assumed to be mutually independent. 

\section{Inteference-Removal with Partitions in Space}
\label{sec:slice}
\begin{figure}[tp]
\centering
\includegraphics[width=\columnwidth]{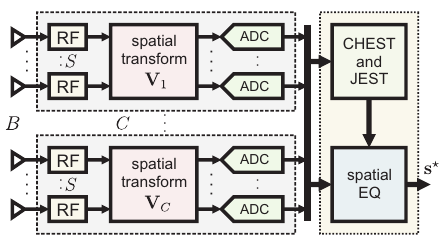}
\caption{System overview of SNIPS and CHOPS: The $B$ RF chains are clustered into $C$ antenna clusters of size $S$. The analog baseband signals are transformed clusterwise to the beam-slice domain before being converted to digital by the ADCs. SNIPS or CHOPS then performs jammer interference estimation, channel estimation, and data detection in the beam-slice domain.}
\label{fig:system_overview}
\end{figure}

Our approach aims to protect most of the ADCs from the jammer by exploiting the strong spatial directivity of mmWave signals. 
Prior to A/D-conversion, we apply a spatial transform that resolves the incident waves so that only a few ADCs are strongly affected by the jammer.
One can then discount the outputs of these jammer-distorted ADCs during equalization and detect the data symbols mainly based on the outputs of the distortion-free ADCs.

A na\"ive approach would be to transform the (unquantized) receive vector~$\bmy$ to the beamspace (or angular) domain~\cite{brady13} using a DFT according to $\bmy_{B}=\bF_B \bmy$, and set the entries of the beamspace vector $\bmy_{B}$ dominated by jammer interference to zero.
Accordingly, the resulting vector $\bmy_{B,\text{mask}}$ will have entries equal to zero if they belong to jammer-contaminated beams, and otherwise equal to the corresponding entry in $\bmy_{B}$.
One could then equalize $\bmy_{B,\text{mask}}$ as if neither interference nor interference-cancellation had occurred, for instance with LMMSE estimation, 
\mbox{$\bms^\star = \inv{(\herm{\bH}\bH + \No/\Es\bI_B)}\herm{\bH}\bmy_{B\text{,mask}}$.}
While such an approach would be effective in suppressing jammer interference, implementing large analog spatial transforms, such as the DFT, is nontrivial (cf.~\fref{sec:prior_work}), especially when considering hundreds of BS antennas~\cite{JoshiThesis,vallappil21a,guo21a,hall90a,tudosie09a}.

As a practical alternative, we propose beam-slicing, a distributed, localized, and hence small analog spatial transform that can be implemented in practice. We also propose two jammer mitigation methods based on beam-slicing, SNIPS and CHOPS.
SNIPS and CHOPS do not discard the outputs of jammer-affected ADCs completely, but instead take into account each ADC-output's fidelity by estimating the amount of jammer interference at the individual ADCs.
Moreover, SNIPS and CHOPS utilize Bussgang's decomposition in order to take into account the effects of practical, low-resolution ADCs.
The only difference between SNIPS and CHOPS lies in how they mitigate jammer interference: SNIPS performs LMMSE equalization where the jammer interference is treated as noise; CHOPS projects the receive signal (and channel estimate) onto the subspace orthogonal to the jammer channel before applying a conventional LMMSE equalizer. 

In the following discussion of these two methods, SNIPS and CHOPS differ only in Section \ref{subsec:data_detection}. The rest of their jammer-mitigation pipeline, illustrated by \fref{fig:system_overview}, is identical.

\begin{figure}[tp]
\centering
\includegraphics[width=0.95\columnwidth]{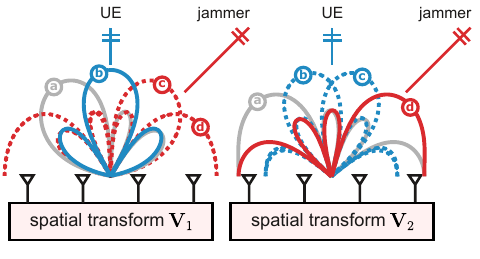}
\caption{Effect of $S\!=\!4$ beam-slicing for a $B\!=\!8$ antenna array considering one UE and one jammer (both in far-field). Spatial transform $\bV_1$ illustrates the $4$ beams (a), (b), (c), and (d) of $\bF_4$, of which (b) is perfectly aligned to capture the UE's transmitted power, while beams (c) and (d) partially capture the jammer's power. Such partial capturing would lead to unsatisfactory jammer mitigation. To increase angular diversity, the spatial transform $\bV_2$ first rotates the received signal so that the $4$ beams (a), (b), (c), and (d)  of the resulting transform are also shifted. As a result, the shifted beam (d) is now able to capture the jammer's power, which allows for better jammer mitigation.}
\label{fig:explain_beamslicing}
\end{figure}

\subsection{Beam-Slicing} \label{subsec:beam_slicing}

Beam-slicing transforms partitions of the (unquantized) receive vector~$\bmy$ into a ``rotated'' angular domain.
Beam-slicing is fully analog, non-adaptive, and operates in decentralized fashion. Specifically, we partition the BS antenna array into~$C$ clusters of equal size, each consisting of $S=B/C$ adjacent BS antennas. The corresponding partition of the receive vector is denoted $\bmy = \tp{[\tp{\bmy_1}, \dots, \tp{\bmy_C}]}$, where $\bmy_c\in\complexset^S, c=1,\dots,C$.
Beam-slicing then transforms the receive clusters into what we call the \textit{beam-slice domain} as follows:
\begin{equation}
	\hat{\bmy}_c = \bV_c\,\bmy_c, \quad c=1,\dots,C.
\end{equation}
Here, the $c$th cluster matrix $ \bV_c$ is given as a progressively phase-shifted $S$-point DFT matrix $\bF_S$ according to 
\begin{align}
	\bV_c = 
	\!\bF_S~\text{diag}\!\left(1, 
	        \dots, e^{-i\frac{2\pi}{B}(c-1)(s-1)},
	\dots, e^{-i\frac{2\pi}{B}(c-1)(S-1)}
	\right)\!, \label{eq:Vc} \end{align}
so that the different clusters transform to successively ``rotated'' angular domains. 	
Such phase-rotated DFTs are used to increase the ``angular diversity'' of beam-slicing to better capture the possible directions of jammers---see \fref{fig:explain_beamslicing} for a graphical explanation, as well as \fref{sec:ablation} for empirical evaluation.
The action of the beam-slicer is summarized~as
\begin{align}
\hat{\bmy} = \bV\bmy = \begin{bmatrix}
	\bV_1\bmy_1 \\ \vdots \\ \bV_C\bmy_C
\end{bmatrix},	
\end{align}
where $\bV = \text{diag}(\bV_1, \dots, \bV_C)$ and $\herm{\bV}\bV=\bI_B$.
We also point out that for an (impractical) cluster size $S=B$, beam-slicing corresponds to performing a conventional beamspace transform.
In what follows, it will be convenient to 
define the \emph{beam-sliced channel matrix} $\hat\bH = \bV\bH$, and the \emph{beam-sliced jammer channel} $\hat\Hj = \bV\Hj$, which allows us to rewrite \eqref{eq:ant_io} as
\begin{align}
	\hat{\bmy} = \hat\bH \bms + \hat\Hj\sj + \hat\bmn, \label{eq:io_beamspace}
\end{align}
where $\hat\bmn=\bV\bmn$ has the same distribution as $\bmn$. 

\subsection{Data Conversion} \label{subsec:adc}
The beam-sliced signal $\hat{\bmy}$ is then converted into the digital domain.
To~take into account the quantization errors of low-resolution ADCs, we assume that the beam-sliced vector~$\hat{\bmy}$ is quantized as follows:
\begin{align}
	\bmr = \inv{\bG}\left(\quant\left(\Re\{\bG \hat{\bmy}\}\right) + i\quant\left(\Im\{\bG \hat{\bmy}\}\right)\right). \label{eq:quantization1}
\end{align}
Here, $\bG=\text{diag}(g_1,\dots,g_B)$ is a diagonal matrix that represents beam-wise gain-control. 
The quantization function $\quant(\cdot)$ is applied entry-wise to its input and represents a $q$-bit uniform midrise quantizer with step size $\Delta$ defined as
\begin{align} \label{eq:quantizer}
\quant(x)\triangleq
    \begin{cases}
      \Delta\floor{\frac{x}{\Delta}}+\frac{\Delta}{2}, & \text{if}\ |x|<\Delta 2^{q-1}\\
      \frac{\Delta}{2}(2^q-1)\frac{x}{|x|}, & \text{if}\ |x|>\Delta 2^{q-1}.
    \end{cases}
\end{align}
For the quantizer's step size $\Delta$, we use the value which minimizes the mean-square error (MSE) between the quantizer's output~$\quant(x)$ and its input~$x$ under the assumption that~$x$ is Gaussian with zero mean and unit variance~\cite{max60a}.
For convenience, we will denote \eqref{eq:quantization1} as
\begin{align}
	\bmr = \compquant(\hat\bmy). \label{eq:quantization2}
\end{align}

The per-beam gains aim to ensure that the values entering the quantizers have unit variance per real dimension and are obtained from a set of $T$ training vectors 
$\tilde\bY=[\tilde\bmy_1, \dots, \tilde\bmy_T]$ as 
\begin{align}
	g_b = \sqrt{\frac{2T}{\|\tilde\bmy_{(b)}\|_2^2}},
	\quad b=1,\dots,B, \label{eq:learnG}
\end{align}
where $\tilde\bmy_{(b)}$ is the $b$th row of $\tilde\bY$.  

\subsection{Jammer Interference Estimation} \label{subsec:jest}
Our interference-mitigating data detection schemes (see below) rely on knowledge of statistics of the jammer's interference. 
Specifically, these methods need to know the covariance matrix $\bC_j = \Ex{\sj}{\hat\Hj \sj\herm{(\hat\Hj \sj)}}$. 
We suggest to estimate $\bC_j$ from a number of channel uses during which the UEs are not transmitting and where the jammer transmits i.i.d.\ jamming symbols $[\sj_1, \dots, \sj_N]$, so the beam-sliced receive matrix and corresponding quantization output can be modeled as
\begin{align}
	\hat\bY_J &= \hat\bmj [\sj_1, \dots, \sj_N] + \hat\bN \\
	\bR_J &= \compquant(\hat\bY_J). \label{eq:jammer_quantiz}
\end{align}
In order to learn the jammer channel, we propose to estimate the gain matrix $\bG$ with \eqref{eq:learnG} directly from the received signals, $\bG=\bG(\hat\bY_J)$.
Our estimate $\EIf$ of the covariance matrix $\bC_j$ is given by
\begin{align} \label{eq:covarianceestimate}
	\EIf = \frac1N \bR_J\herm{\bR_J}. 
\end{align}

\subsection{Channel Estimation} \label{subsec:chest}
We estimate the UEs' channel matrix using a pilot-based LS estimator from $U$ orthogonal pilot sequences \mbox{$\bS_P = [\bms_1, \dots, \bms_U]$.} 
The channel estimation pipeline passes through the beam-slicer and the quantizer. The beam-sliced receive matrix and corresponding quantization are modeled as 
\begin{align}
	\hat\bY_P &= \hat\bH \bS_P + \hat\Hj[w_1, \dots, w_U] + \hat\bN \\
	\bR_P &= \compquant(\hat\bY_P), \label{eq:pilot_receive}
\end{align}
where we estimate the gain matrix $\bG$ with \eqref{eq:learnG} from the pilot sequence itself, $\tilde\bY=\hat\bY_P$.
(We fix this choice of $\bG=\bG(\hat\bY_P)$ also for the data detection phase described below.) 
We then estimate the beam-sliced channel matrix with an LS estimate: 
\begin{align}
	\Hest &= \bR_P \herm{\bS_P}\inv{\left(\bS_P\herm{\bS_P}\right)} \label{eq:pilot1} \\
	&\overset{(a)}{=} \frac{1}{U\Es} \bR_P \herm{\bS_P}, \label{eq:pilot2} 
\end{align}
where $(a)$ holds because the pilot sequence is orthogonal.\footnote{This LS channel estimator ignores the jammer interference. In this regard, see also the explanatory note in Footnote \ref{epicfootnotereference}.}

\subsection{Bussgang Analysis}
\label{sec:bussgang}
So far (i.e., for jammer covariance estimation and channel estimation), we have neglected the distortion introduced by the quantization step in \eqref{eq:quantization1}--\eqref{eq:quantization2}. 
We do not, however, neglect this distortion during the data detection step:\footnote{In our experiments, we also tried applying the Bussgang decomposition from \fref{sec:bussgang} to channel estimation. However, we did not observe noticeable error-rate performance improvements. For this reason, our methods SNIPS and CHOPS apply Bussgang decomposition only for data detection.}  
The quantization step introduces distortions which are correlated with the quantizer inputs. We assume that the components of the quantizer inputs are real-valued Gaussian with zero mean and unit variance. This assumption allows us to perform a component-wise Bussgang decomposition~\cite{bussgang52a} of the quantization signal as follows:
\begin{equation}
	\quant(x) = \gamma \, x + d.
\end{equation}
Here, $\gamma$ is the quantizer's \emph{Bussgang gain}, and the distortion~$d$ has zero mean and is uncorrelated with $x$. The Bussgang gain is given by \cite[Eq.\,(9)]{minkoff85a}
\begin{equation}
	\gamma = \frac{\Ex{}{\quant(x)x}}{\Ex{}{x^2}}
\end{equation}
and the variance of the distortion $d$ is
\begin{equation}
	D = \Ex{}{d^2} = \Ex{}{\quant(x)^2} - \gamma^2 \Ex{}{x^2}. \label{eq:bussgang_distortion}
\end{equation}
Bussgang's decomposition allows us to rewrite~\eqref{eq:quantization2} as 
\begin{align}
	\bmr &= \compquant(\hat\bmy) \\
	&= \inv{\bG}\left(\quant\left(\Re\{\bG \hat{\bmy}\}\right) + i\quant\left(\Im\{\bG \hat{\bmy}\}\right)\right)\\
	&= \inv{\bG}\left(\gamma\,\Re\{\bG \hat{\bmy}\} + \bmd_r + i\left(\gamma\,\Im\{\bG \hat{\bmy}\} + \bmd_i\right)\right) \\
	&= \gamma\, \inv{\bG} \left(\Re\{\bG \hat{\bmy}\} + i\Im\{\bG \hat{\bmy}\}\right) +\inv{\bG}( \bmd_r + i\bmd_i) \\
	&= \gamma\, \hat{\bmy} + \inv{\bG} \bmd, \label{eq:bussgang_almostfinal}
\end{align}	
where we define $\bmd = \bmd_r + i\bmd_i$. Based on \eqref{eq:bussgang_distortion}, we make the idealized assumption that the covariance matrix of $\bmd$ is
\begin{equation}
	\bC_d = \Ex{\bmd}{\bmd\herm{\bmd}} \approx 2D\,\bI_B. \label{eq:assump_diag}
\end{equation}

We now combine \eqref{eq:bussgang_almostfinal} with \eqref{eq:io_beamspace}, which results in the following input-output relation:
\begin{equation}
	\bmr = \gamma\hat\bH\bms + \gamma\hat\Hj\sj + \gamma\hat\bmn + \inv{\bG} \bmd. \label{eq:bussgang_final}
\end{equation}

The expression in \eqref{eq:bussgang_final} also sheds light on how low-resolution ADCs exacerbate the jammer interference: If there is significant jamming energy at the ADC inputs, then the elements $g_b$ in~\fref{eq:learnG} of the (diagonal) gain control matrix $\bG$ are close to zero. Therefore, the entries of $\inv{\bG}$ are very large, and we see from the last term in \eqref{eq:bussgang_final} how this leads to amplification of the quantization noise $\bmd$.
Beam-slicing aims to ensure that only a small set  of the entries of $\inv{\bG}$ are large, so that the signal can be detected from the remaining components of $\bmr$.

\subsection{Interference-Removing Data Detection: SNIPS \& CHOPS} \label{subsec:data_detection}

We now describe two different ways of treating the jammer interference in the beam-sliced receive vector $\bmr$ from \eqref{eq:bussgang_final} by means of linear equalization, resulting in SNIPS and CHOPS. 

\emph{1) SNIPS:}
The first method uses an LMMSE-like detector that treats the jammer interference as noise, resulting in SNIPS. In deriving the SNIPS data detector, we make certain idealizing assumptions. 
The detector is
\begin{align}
	\bms^\star = \Wsnips\,\bmr, \label{eq:snips}
\end{align}
where the matrix $\Wsnips$ is given by
 \begin{align}
 	&\Wsnips \nonumber\\
 	&= \gamma\Es\herm{\Hest}\inv{\big( \gamma^2\Es\Hest\herm{\Hest} \!+ \gamma^2\EIf \!+ \gamma^2\No\bI_B \!+ 2D\bG^{-2}\big)} \\
 	&= \frac{1}{\gamma} \herm{\Hest}\inv{\Big( \Hest\herm{\Hest} + 
 		\frac{1}{\Es}(\EIf + \No \bI_B + 2D\gamma^{-2}\bG^{-2})\Big)}. \label{eq:WF1}
 \end{align}
Here, we use the gain control matrix acquired during the pilot phase, $\bG=\bG(\hat\bY_P)$.
If the diagonal approximation \eqref{eq:assump_diag} and the approximations $\Hest\approx\hat\bH$ and $\EIf \approx \bC_j$ were exact, equation \eqref{eq:snips} would implement the LMMSE estimator when the interference is regarded as noise.  

\emph{2) CHOPS:}
The second method projects the received signals onto the $(B-1)$-dimensional subspace orthogonal to the beam-sliced jammer channel and performs LMMSE-like data detection in this projection space. 
The matrix for this projection would be \mbox{$\bP = \bI_B - \hat\Hj \herm{\hat\Hj} / \|\hat\Hj\|^2$} \cite[Sec.\,2.6.1]{GV96}, which can be approximated based on the covariance estimate $\EIf$ in \fref{eq:covarianceestimate} as
\begin{align}
	\Pest = \bI_B - \frac{\EIf}{\tr{(\EIf)}}; \label{eq:pest}
\end{align}
see Appendix \ref{app:projection} for a derivation.
To obtain a consistent model of the wireless channel, the projection is also applied during channel estimation. So, for CHOPS, we replace \eqref{eq:pilot_receive} with 
\begin{align}
	\bR_P &= \Pest\compquant(\hat\bY_P), \label{eq:chop_pilot_receive}
\end{align}
while the rest of the channel estimation phase remains identical as in Section \ref{subsec:chest}. 
We denote the obtained estimate of the projected, beam-sliced channel matrix with $\Hproj$. 
(Mathematically, instead of projecting the quantized pilot receive matrix on the orthogonal subspace as in \eqref{eq:chop_pilot_receive}, 
we could equivalently have projected the estimate of the beam-sliced channel matrix \eqref{eq:pilot_receive}, as $\Hproj = \Pest\, \Hest$.\footnote{\label{epicfootnotereference}Note that we did not explicitly address the jammer contamination of the channel estimate in SNIPS. The reason is that we observe empirically that the malicious influence of this contamination is negated by the $\EIf$-term in the inverse of \eqref{eq:WF1}, which makes such a projection unnecessary. To illustrate this, we point out that the IAN-curve in \fref{fig:highres_easy} is also based on a detector whose channel estimate suffers from strong jammer contamination: Its performance nevertheless matches the performance of the case without jammer (where the channel estimate is uncontaminated), showing that this channel contamination ultimately has no consequences on the error-rate performance.
})
For data detection, the beam-sliced receive vector $\bmr$ from \eqref{eq:bussgang_final} is  projected accordingly: 
\begin{align}
	\rproj = \Pest\, \bmr.
\end{align}
We then perform LMMSE-like data detection on the resulting quantities as 
\begin{align}
	\bms^\star = \Wchops\,\rproj \label{eq:chops}
\end{align}
with
\begin{align}
	&\Wchops \nonumber\\ 
	&= \frac{1}{\gamma} \herm{\Hproj}\inv{\Big( \Hproj\herm{\Hproj} + 
 		\frac{1}{\Es}(\No \bI_B + 2D\gamma^{-2}\bG^{-2})\Big)},
\end{align}
which would implement LMMSE detection in the projection space if the diagonal approximation \eqref{eq:assump_diag} as well as the~approximations $\Hest\approx\hat\bH$ and $\EIf \approx \bC_j$ were exact.

\section{Results}
\label{sec:res}

\begin{figure*}[t]%
\centering 
\subfigure[LoS]{
\includegraphics[width=0.8\columnwidth]{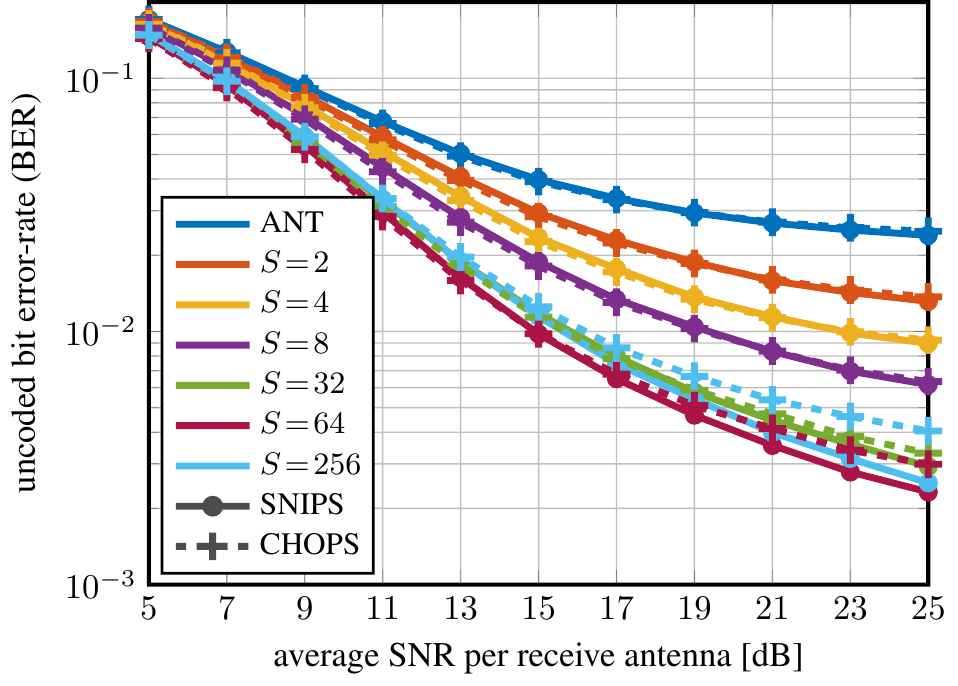}
\label{fig:sweep_S:ber:los}
}
~~\quad\quad\quad
\subfigure[Non-LoS]{
\includegraphics[width=0.8\columnwidth]{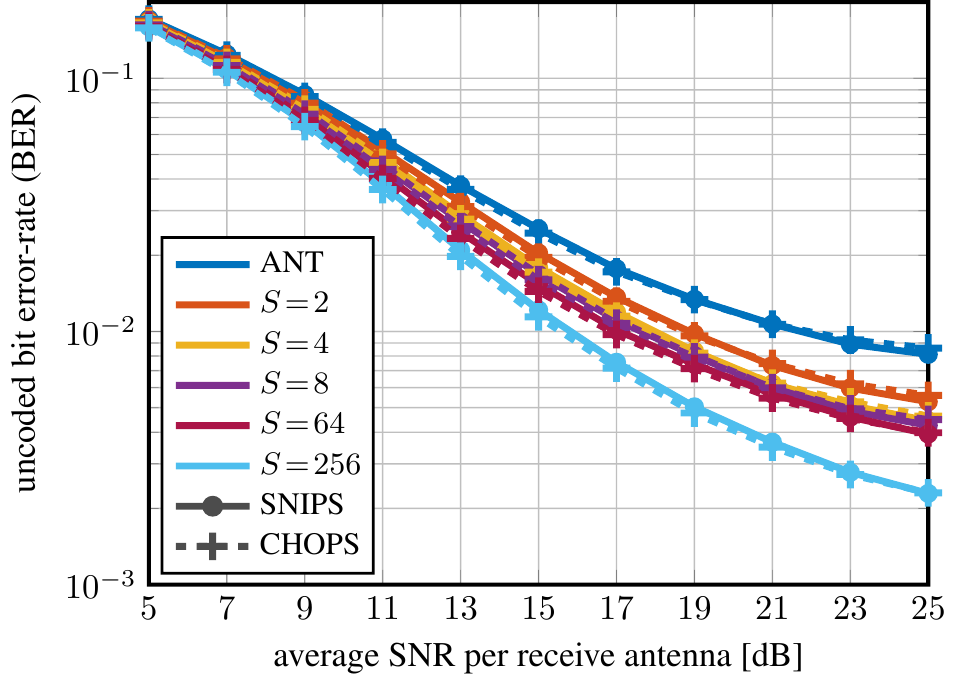}
\label{fig:sweep_S:ber:nlos}
}
\caption{Uncoded bit error-rate (BER) when performing jammer mitigation in the antenna (ANT) and beam-slicing domains with SNIPS (solid) and CHOPS (dashed) for different cluster sizes $S$. We consider transmission over both (a) LoS and (b) non-LoS channels. The relative jammer power is $\rho=25$\,dB and the ADC resolution is $q=4$ bits.}
\label{fig:sweep_S:ber}

\subfigure[LoS]{
\includegraphics[width=0.8\columnwidth]{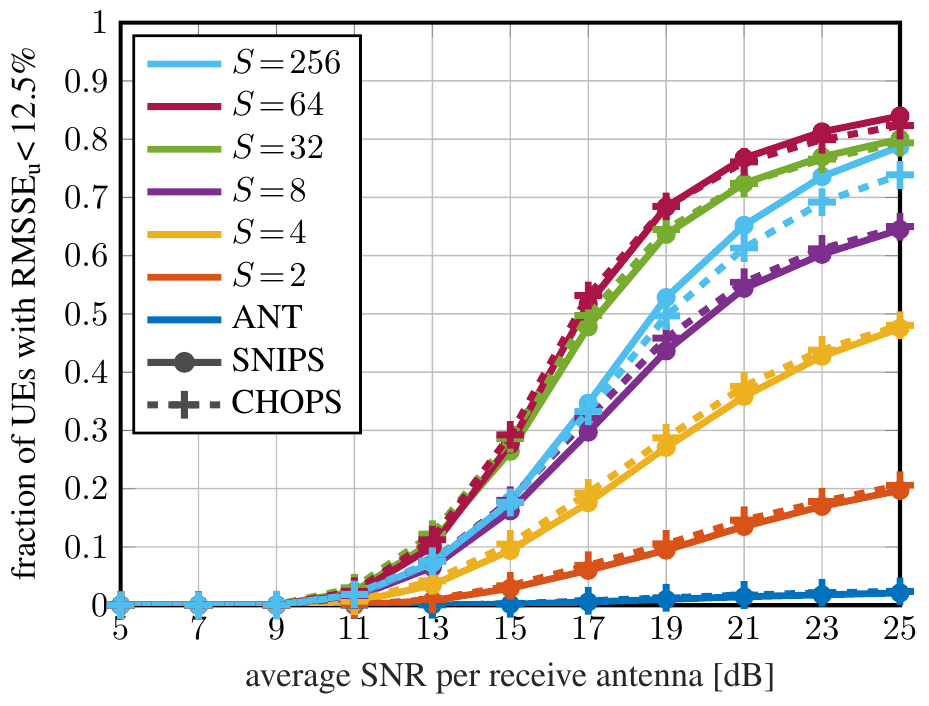}
\label{fig:sweep_S:cdf:los}
}
~~\quad\quad\quad 
\subfigure[Non-LoS]{
\includegraphics[width=0.8\columnwidth]{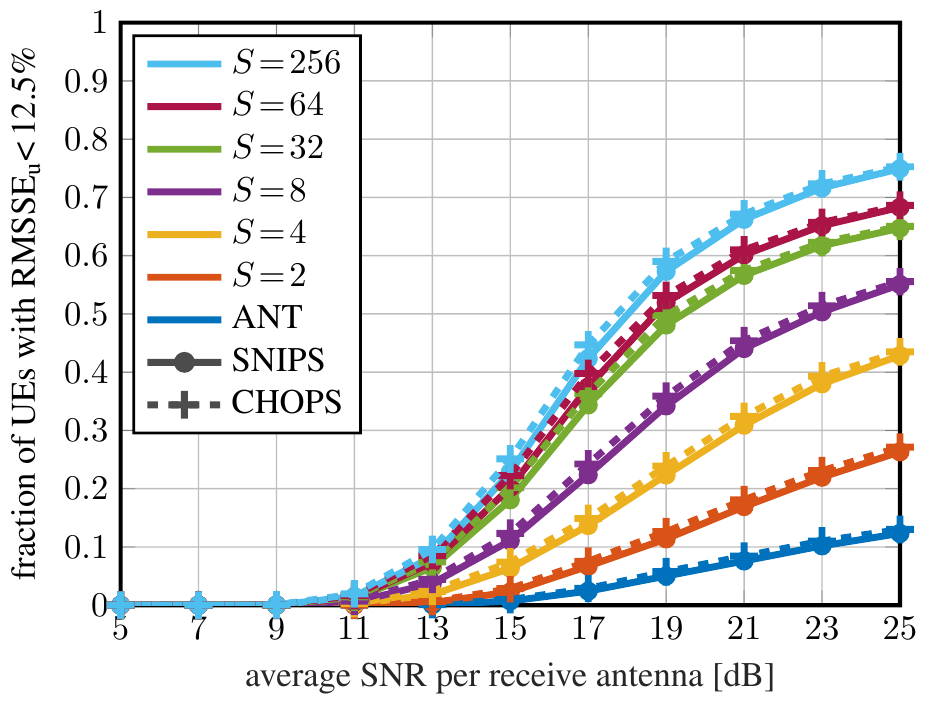}
\label{fig:sweep_S:cdf:nlos}
}
\caption{Fraction of successfully served UEs when performing jammer mitigation in the antenna (ANT) and beam-slicing domains with SNIPS (solid) and CHOPS (dashed) for different cluster sizes $S$. We consider transmission over both (a) LoS and (b) non-LoS channels.  The relative jammer power is $\rho=25$\,dB and the ADC resolution is $q=4$ bits.}
\label{fig:sweep_S:cdf}

\end{figure*}

We now demonstrate the efficacy of beam-slicing by comparing SNIPS and CHOPS with two baselines that differ from SNIPS and CHOPS only in lacking analog beam-slicing.
We note that the operations of these baselines correspond to SNIPS and CHOPS with cluster size $S=1$, which implies $\bV=\bI_B$:
The baselines perform A/D-conversion and soft-nulling or orthogonal projection directly in the antenna domain\footnote{We note that the baselines consisting of SNIPS and CHOPS in the antenna domain correspond to IAN and POS from \fref{fig:intro_plot}, respectively. However, for simplicity, we refrain from using the terms IAN and POS in the remainder.}.  
We will show that in the presence of a strong jammer, beam-slicing with a two-antenna cluster size ($S=2$) already yields significant improvements over these baselines.
We will also identify situations that benefit from beam-slicing considering different ADC resolutions and levels of jamming power.
Finally, we will empirically justify two of our choices when designing beam-slicing, namely the use of the DFT as the base spatial transform and the use of ``rotated'' DFTs for each cluster.

\subsection{Simulation Setup and Performance Metrics}
We simulate a mmWave massive MIMO system in which $U=32$ single-antenna UEs communicate to a $B=256$ antenna BS both under LoS and non-LoS conditions.
The UE and jammer channels are generated using the QuaDRiGa mmMAGIC urban microcellular (UMi) model~\cite{jaeckel2014quadriga} for a carrier frequency of $60$\,GHz and a uniform linear array (ULA) with half-wavelength spacing. We let the $U$ UEs and the jammer be randomly placed at distances from $10$\,m to $100$\,m within a $120$\textdegree~angular sector in front of the BS. The minimum angular separation between two UEs, as well as between the jammer and any UE, is $1$\textdegree. We assume $\pm3\,$dB per-UE power control, so that the ratio between maximum and minimum per-UE receive power is $4$. The transmit constellation is 16-QAM. 
In our simulations, we define the average receive signal-to-noise ratio (SNR) as 
\begin{align}
\textit{SNR} \define \frac{\Ex{\bms}{\|\bH\bms\|_2^2}}{\Ex{\bmn}{\|\bmn\|_2^2}}.
\end{align}
To quantify the jammer's power in comparison to a single UE, we define the relative jammer power $\rho$ as
\begin{align}
\rho \define \frac{U\Ex{\sj}{\|\Hj\sj\|_2^2}}{\Ex{\bms}{\|\bH\bms\|_2^2}} =  \frac{U\Ej \|\Hj\|_2^2}{\Es\|\bH\|_F^2}.
\end{align}

We will consider two performance metrics: Uncoded BER and the per-UE root mean-square symbol error (RMSSE)~\cite{Song21}. The RMSSE for the $u$th UE over $n$ data symbol slots is: 
\begin{align}
	\rmmse_u \triangleq \sqrt{ \frac{\sum_{k=1}^{n}\big|s^\star_{u,k} - s_{u,k} \big|^2 }
	{\sum_{k=1}^{n} \left|s_{u,k} \right|^2} }. \label{eq:rmmse}
\end{align}
Here, $s_{u,k}$ and $s^\star_{u,k}$ are the transmitted and estimated data symbol of the $u$th UE at time slot $k$, respectively.
To understand the relevance of $\rmmse_u$ as a performance metric, it is helpful to compare it to the error vector magnitude (EVM) requirements in the 3GPP 5G NR technical specification~\cite{3gpp21a}. The EVM is loosely speaking the square root of the sum of $\rmmse_u$-squared over all $U$ UEs. Vice versa, the $\rmmse_u$ is loosely speaking a single-UE proxy for the EVM. We will therefore interpret $\rmmse_u$ as a random variable and analyze its distribution by means of Monte-Carlo simulations. 
For 16-QAM transmission, the 3GPP 5G NR technical specification requires an EVM below $12.5$\%~\cite[Tbl. 6.5.2.2-1]{3gpp21a}. As a performance metric, we therefore consider the fraction of UEs (averaged over many UE placements/channel realizations, noise and jammer realizations, and data transmissions) for which the $\rmmse_u$ is below $12.5$\%. 

\begin{figure*}[tp]
\centering
\subfigure[CHOPS, LoS]{
\includegraphics[width=0.31\linewidth]{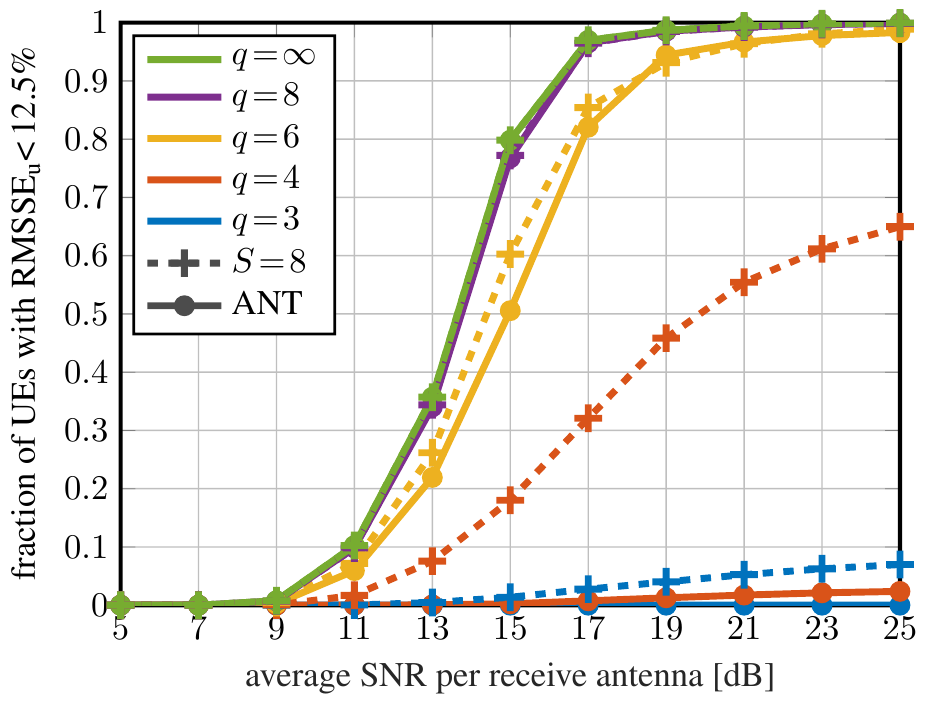}
\label{fig:sweep_ADC:pos_los}
}
\subfigure[SNIPS, LoS]{
\includegraphics[width=0.31\linewidth]{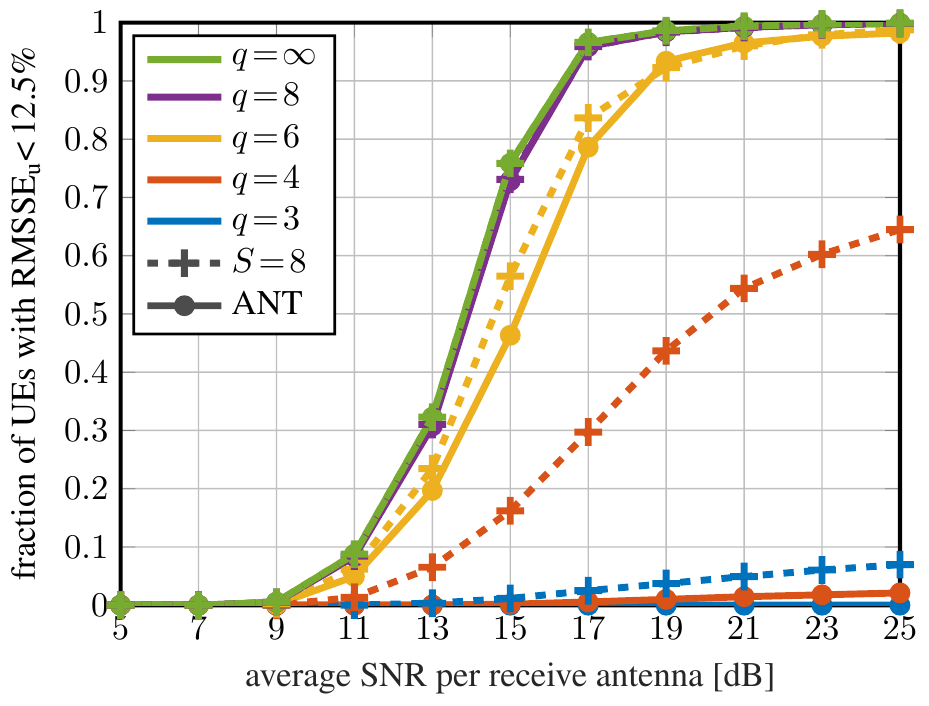}
\label{fig:sweep_ADC:lmmse_los}
}
\subfigure[SNIPS, non-LoS]{
\includegraphics[width=0.31\linewidth]{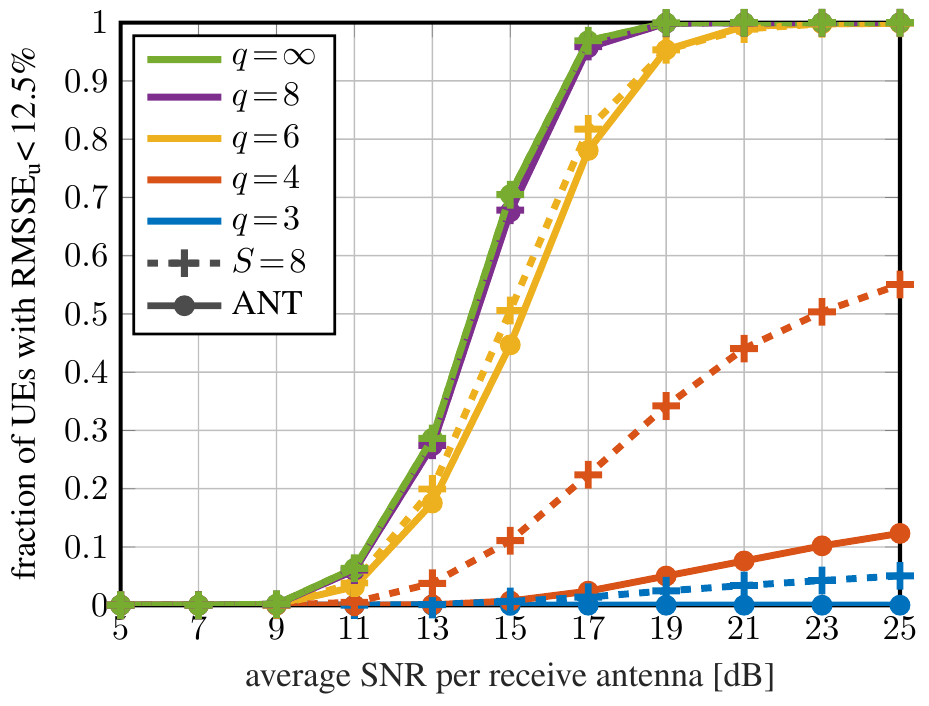}
\label{fig:sweep_ADC:lmmse_nlos}
}
\caption{Fraction of successfully served UEs vs.~SNR, compared between antenna domain (ANT; solid curves) and beam-slicing with $S=8$ (dashed curves) for different ADC resolutions of $q$ bits per real dimension. The relative jammer power is $\rho=25$\,dB.}
\label{fig:sweep_ADC}
\end{figure*}
\begin{figure*}[tp]
\centering
\subfigure[CHOPS, LoS]{
\includegraphics[width=0.31\linewidth]{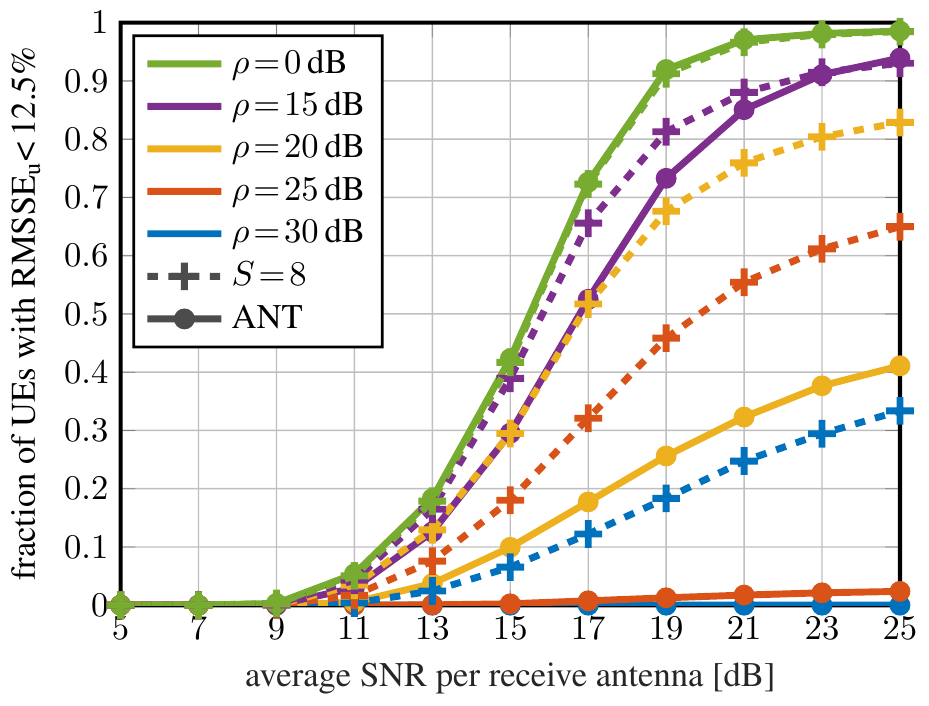}
\label{fig:sweep_jammer:pos_los}
}
\subfigure[SNIPS, LoS]{
\includegraphics[width=0.31\linewidth]{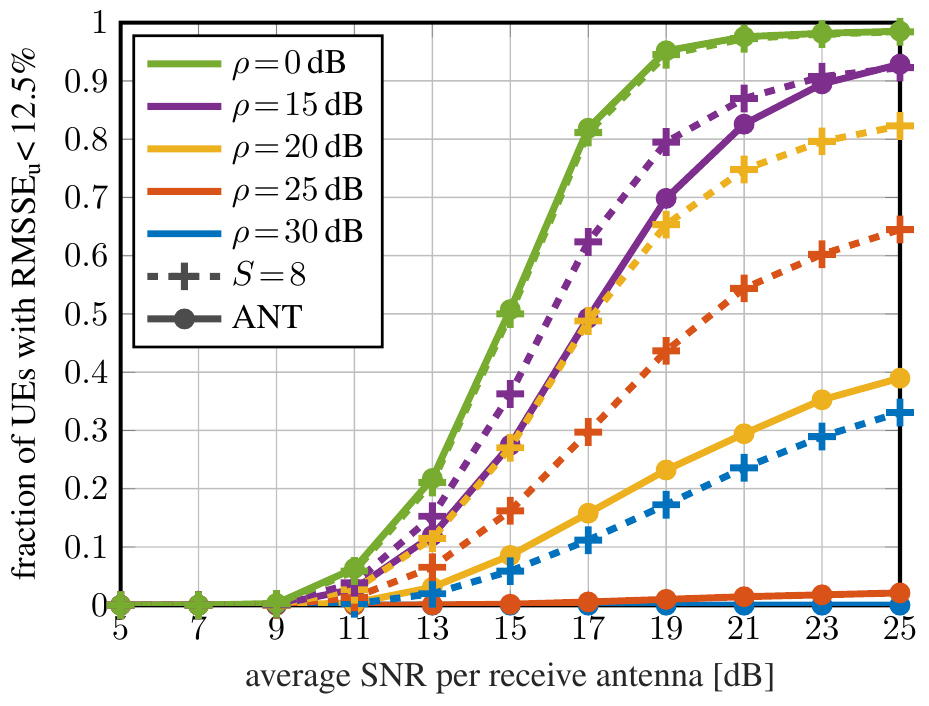}
\label{fig:sweep_jammer:lmmse_los}
}
\subfigure[SNIPS, non-LoS]{
\includegraphics[width=0.31\linewidth]{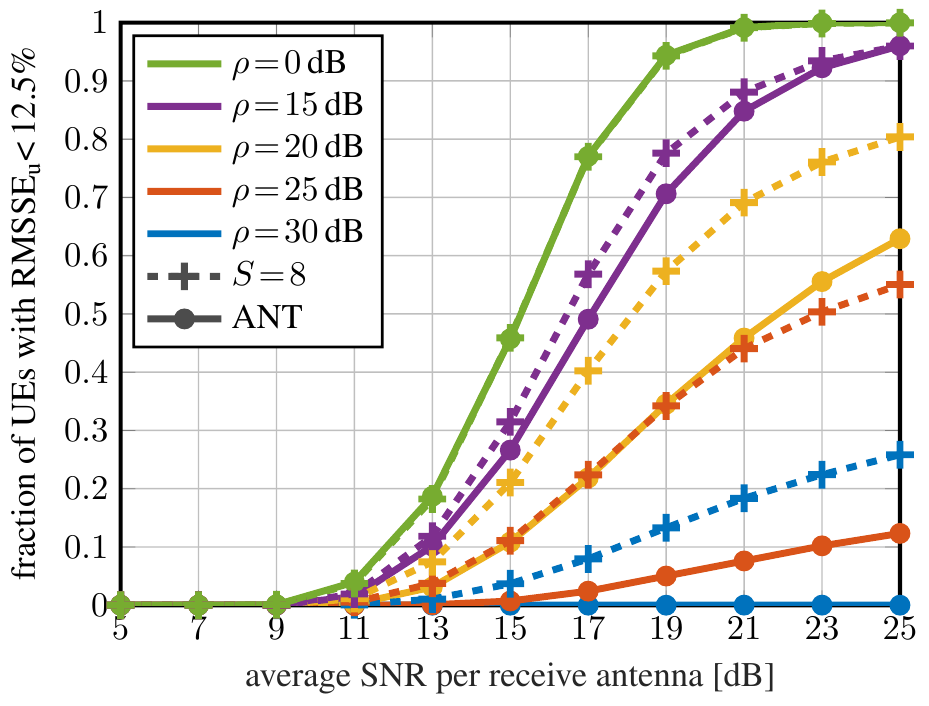}
\label{fig:sweep_jammer:lmmse_nlos}
}
\caption{Fraction of successfully served UEs vs.~SNR, compared between antenna domain (ANT; solid curves) and beam-slicing with $S=8$ (dashed curves) for different relative jammer powers $\rho$ [dB]. The ADC resolution is $q=4$ bits per real dimension.}
\label{fig:sweep_jammer}
\vspace{-0.15cm}
\end{figure*}

\subsection{The Efficacy of Beam-Slicing} \label{sec:res_eff}
In Figures~\ref{fig:sweep_S:ber} and \ref{fig:sweep_S:cdf}, we evaluate the performance of SNIPS (solid) and CHOPS (dashed) for different antenna cluster sizes~$S$. 
We compare the baselines, which perform soft-nulling or orthogonal projection in antenna domain (ANT), against SNIPS and CHOPS with cluster sizes $S\in\{2,4,8,16,32,64, 256\}$, hence considering analog beam-slicing that only operates on a pair of adjacent antennas up to a single cluster consisting of the whole antenna array.
We note that with a cluster size $S=B=256$, beam-slicing corresponds to performing a full beamspace transform.
For these experiments, we consider a strong relative jammer power $\rho=25$\,dB, and a BS with  $q=4$ bit ADCs (per real dimension).

\fref{fig:sweep_S:ber:los} shows uncoded BER results under LoS conditions.
We see that beam-slicing with two-antenna clusters ($S=2$) already yields noticeable BER improvements over the antenna-domain baselines. 
SNIPS and CHOPS are virtually identical for small clusters, but SNIPS slightly outperforms CHOPS for large clusters.
\fref{fig:sweep_S:ber:los} also shows that large clusters outperform small ones. 
However, the performance of a full beamspace transform ($S=256$) is inferior to $S=64$, which exhibits the best performance for both  SNIPS and CHOPS. The reason for this has nothing to do with how the beamspace transform distributes the jammer interference to the ADCs. Instead, the performance decrease can be addressed to the fact that, after a full beamspace transform, the UE signals are concentrated to only a few ADCs, so that low-resolution ($q \leq 4$) ADCs can no longer represent them accurately.
This observation suggests that the fully-centralized beamspace transform, which in any case is impractical, may not necessarily be optimal for achieving the full potential of beam-slicing. 

\fref{fig:sweep_S:ber:nlos} shows results for non-LoS conditions, where the channels are less sparse (in beamspace domain) than under LoS conditions. As a consequence, the gains obtained by beam-slicing over the antenna-domain baselines are not as pronounced as for LoS conditions.
Nonetheless, beam-slicing with $S=8$ still offers a gain of 4\,dB at a BER of 1\% compared to the baselines. 
Due to the decrease in channel sparsity, there is now a strict improvement of performance for larger cluster sizes: Even for a full beamspace transform, the UE signals are sufficiently distributed over the different low-resolution ADCs as to be appropriately represented.
 
The behavior of the fraction of UEs whose $\rmmse_u$ is below $12.5\%$ at a given SNR is shown in \fref{fig:sweep_S:cdf}, where \fref{fig:sweep_S:cdf:los} shows results under LoS conditions and \fref{fig:sweep_S:cdf:nlos} shows results under non-LoS conditions.
In terms of this criterion, the antenna-domain baselines are not able to successfully serve a significant percentage of UEs under LoS conditions, and less than $20\%$ under non-LoS conditions, regardless of SNR.
In contrast, beam-slicing with four antennas per cluster ($S=4$) at high SNR can serve more than $40\%$ of UEs both under LoS and non-LoS conditions. 
This fraction of successfully served UEs increases with the cluster size $S$ (though again not strictly monotonically under LoS conditions), with SNIPS and CHOPS being able to serve more than $80\%$ of UEs at high SNR for $S\in\{32,64\}$ under LoS conditions, and more than $65\%$ of UEs under \mbox{non-LoS} conditions. 

The general picture that emerges from Figures~\ref{fig:sweep_S:ber} and \ref{fig:sweep_S:cdf} is that beam-slicing significantly improves jammer mitigation already for modestly sized (and hence practical) antenna clusters. 
The strongly similar performance of SNIPS and CHOPS suggests that the advantage afforded by beam-slicing is quite independent from the digital jammer mitigation method used, so that beam-slicing may successfully be combined with and enhance a variety of digital jammer mitigation methods in systems that rely on low-resolution ADCs. 

\subsection{When is Beam-Slicing Needed?}
The experiments in \fref{sec:res_eff} indicate that, for a strong jammer, SNIPS and CHOPS outperform their antenna-domain baselines and that the best performance is achieved with a large cluster size $S$.
However, as discussed in \fref{sec:prior_work}, analog spatial transforms spanning a large number of antennas are difficult to implement in practice~\cite{JoshiThesis,vallappil21a,guo21a,hall90a,tudosie09a}, with a full beamspace transform being probably infeasible for massive MU-MIMO systems. We therefore consider SNIPS and CHOPS with a moderately-sized antenna cluster of $S=8$ for our subsequent evaluations.
 
In \fref{fig:sweep_ADC}, we analyze the impact of ADC resolution when the relative jammer power is $\rho=25$\,dB for CHOPS under LoS conditions (\fref{fig:sweep_ADC:pos_los}), as well as for SNIPS under LoS (\fref{fig:sweep_ADC:lmmse_los}) and non-LoS (\fref{fig:sweep_ADC:lmmse_nlos}) conditions.\footnote{Because of the virtually identical performance of CHOPS and SNIPS under LoS conditions shown by Figures~\ref{fig:sweep_ADC:pos_los} and \ref{fig:sweep_ADC:lmmse_los}, we have omitted a plot for CHOPS under non-LoS conditions. The same applies for \fref{fig:sweep_jammer}.}
We consider the fraction of UEs successfully served in terms of the criterion $\rmmse_u<\!12.5$\%.
For infinite- or high-resolution (\mbox{$q=8$}) ADCs, the beam-slicing methods have identical performance as their antenna-domain baselines in all the setups. However, already for 6-bit ADCs, SNIPS and CHOPS outperform their antenna-domain counterparts. For low-resolution ADCs with $q\leq4$, the antenna-domain methods are unable to serve a significant fraction of UEs (less than~$2$\%) under LoS conditions. In contrast, SNIPS and CHOPS can at least serve some UEs at high SNR even for $q=3$, and they can serve up to $65$\% of UEs for $q=4$ under LoS conditions. Under non-LoS conditions, the beam-slicing gains are less pronounced but still significant.

In \fref{fig:sweep_jammer}, we consider the impact of the relative jammer power $\rho$ for 4-bit ADCs. We again compare CHOPS (LoS conditions; \fref{fig:sweep_jammer:pos_los}) and SNIPS (LoS and non-LoS conditions; Figures~\ref{fig:sweep_jammer:lmmse_los} and \ref{fig:sweep_jammer:lmmse_nlos}) against their antenna-domain counterparts. We see that beam-slicing does not offer a performance gain with weak jammers that are only as strong as the average UE ($\rho=0$\,dB). However, for a jammer with $\rho=15$\,dB, SNIPS and CHOPS already significantly outperform their antenna-domain baselines in terms of successfully served UEs, and this gap continues to widen as jamming power increases further. 

Together, these experiments confirm that strong jammers pose a serious problem for classical all-digital jamming suppression methods when combined with low-resolution ADCs. Our results also show that beam-slicing can successfully mitigate this problem in a practical manner. 

\begin{figure*}[t!]
\centering
\subfigure[BER]{
\includegraphics[width=0.825\columnwidth]{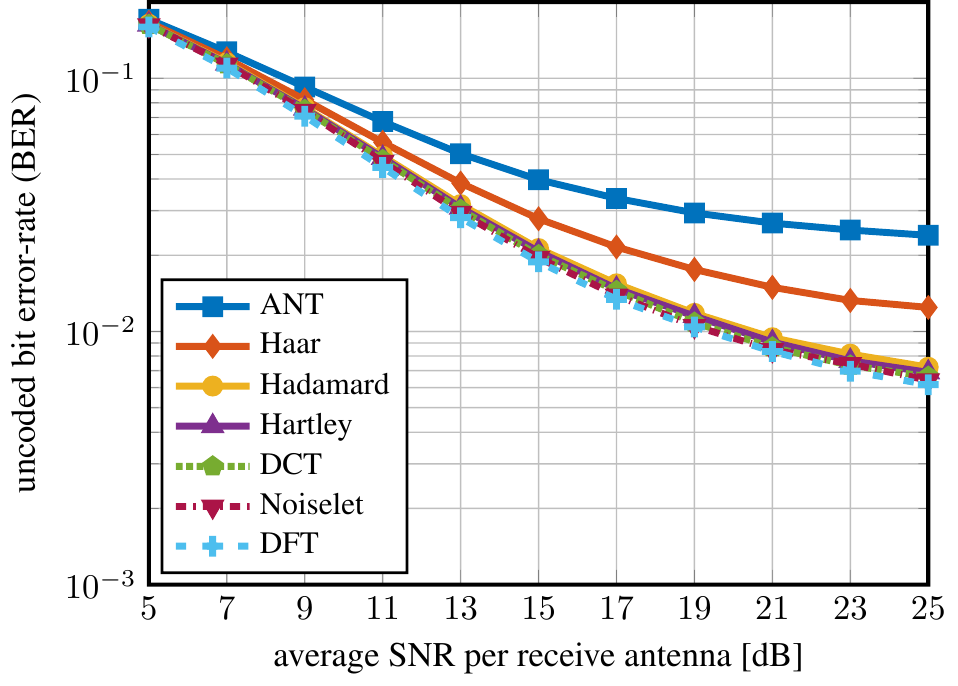}
\label{fig:different_transforms:ber}
}
\!\!\quad\quad\quad\quad\quad 
\subfigure[Fraction of served UEs]{
\includegraphics[width=0.8\columnwidth]{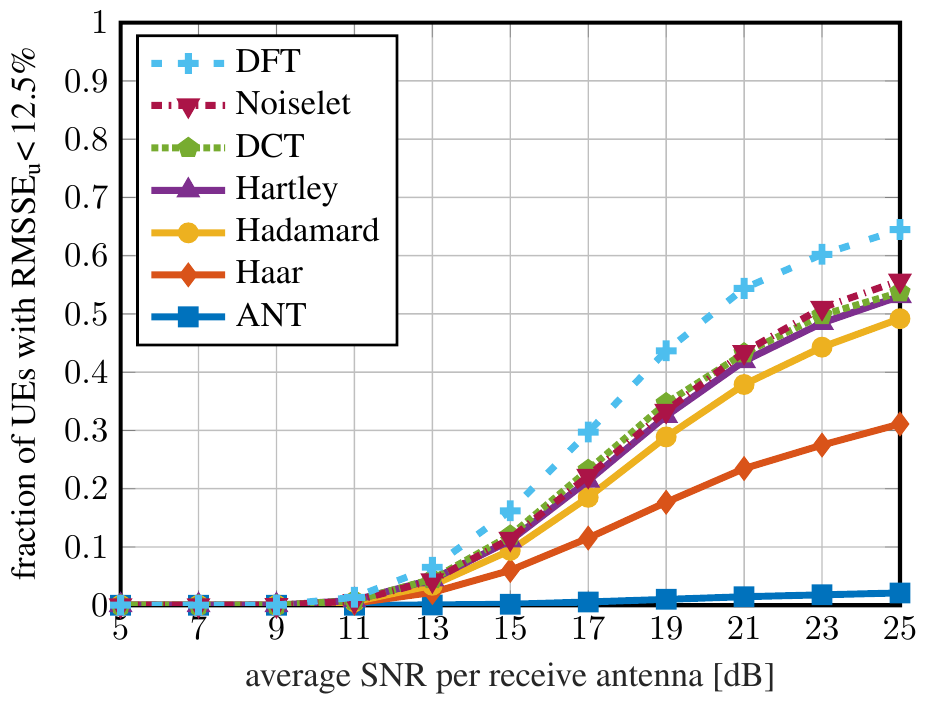}
\label{fig:different_transforms:cdf}
}
\caption{Performance comparison for SNIPS with different ``beam-slicing'' transforms $\bT$. The size of the transform clusters is $S=8$, the relative jammer power is $\rho=25$\,dB, the ADC resolution is $q=4$ bits per real dimension, and the channels are LoS.}
\label{fig:different_transforms}

\end{figure*}

\subsection{Ablation Studies}
\label{sec:ablation}

After showing the general efficacy of beam-slicing and analyzing the conditions under which it leads to performance improvements, we now justify some of the choices in our beam-slicing design empirically. Specifically, we compare the choice of the DFT against other spatial transforms; we show the utility of the increased angular diversity provided by rotating the DFTs as in \eqref{eq:Vc}; and we provide evidence that using steadily-progressing rotation angles as in \eqref{eq:Vc} is close-to-optimal.
For this, we generalize the per-cluster beam-slicing transform \eqref{eq:Vc}~to
\begin{align}
	\bV_c(\bT, \phi_c) = 
	\bT~\text{diag}\!\left(1, 
		\dots, e^{-i\phi_c(s-1)},
	\dots, e^{-i\phi_c(S-1)}
	\right),  \label{eq:general_Vc}
\end{align}
resulting in the overall beam-slicing matrix
\begin{align}
	\bV(\bT,\boldsymbol{\phi}) &= \text{diag}\big(\bV_1(\bT, \phi_1), \dots, \bV_C(\bT, \phi_c)\big),
\end{align}
where $\boldsymbol{\phi}=[\phi_1,\dots,\phi_C]$ are the per-cluster rotation angles, and $\bT\in\mathbb{C}^{S\times S}$ is an arbitrary unitary transform. We restrict ourselves to unitary transforms $\bT$ since this ensures the unitarity of $\bV(\bT,\boldsymbol{\phi})$, which in turn ensures that the beam-sliced noise vector $\hat\bmn=\bV(\bT,\boldsymbol{\phi})\bmn$ has the same distribution as $\bmn$. 

Our previous results have shown the similar-to-identical performance of SNIPS and CHOPS over a wide range of parameters. For simplicity, we therefore restrict our analysis from here on to SNIPS. We consider SNIPS with 4-bit ADCs,  under a relative jammer power $\rho=25$\,dB in LoS transmission.

In \fref{fig:different_transforms}, we compare the performance for different choices of the transform $\bT$ and uniformly-strided cluster rotations $\boldsymbol{\phi}=(2\pi/B)\times[0,1,\dots,C-1]$ as in \eqref{eq:Vc}. 
The cluster size (i.e., the size of the transforms) is $S=8$.
In addition to the proposed DFT, we consider the Haar transform \cite{kaiser1998fast}, the Hadamard transform \cite{pratt1969hadamard}, the discrete Hartley transform \cite{hartley1942more, bracewell1983discrete}, the discrete cosine transform (DCT) \cite{ahmed1974discrete}, and the Noiselet transform \cite{coifman2001noiselets} as candidates for $\bT$.
As a baseline, we also include the performance without beam-slicing, i.e., when operating directly in the antenna domain.  
We see that even the worst-performing Haar transform significantly outperforms the antenna-domain baseline on both performance metrics.
In terms of the uncoded BER shown by \fref{fig:different_transforms:ber}, all the other considered transforms $\bT$ yield similar performance, with a slight advantage for the DFT.
Nevertheless, when looking at the fraction of served UEs in \fref{fig:different_transforms:cdf}, we are able to appreciate a larger performance gap between the DFT and the other transforms.
Together, these results support the choice of the DFT in our beam-slicing design when merely considering system performance.

Despite the foregoing analysis, the choice of the beam-slicing transform may be guided by different factors in practice.
For example, while large DFTs are hard to implement using analog circuitry (e.g., $S=16$ is the largest Butler matrix DFT reported in the open literature; cf. \fref{sec:prior_work}), high-dimensional Hadamard transforms can be implemented with particular ease---see, e.g., the recent work in~\cite{Enciso20} with an implementation for $S=128$.
Therefore, one could imagine preferring a subpar transform, such as the discrete Hadamard transform, but with a large cluster size $S$, over the DFT with a lower cluster size when implementing beam-slicing.
A detailed study of such trade-offs is, however, left for future work. 
\begin{figure}[t!]
\centering
\includegraphics[width=0.8\columnwidth]{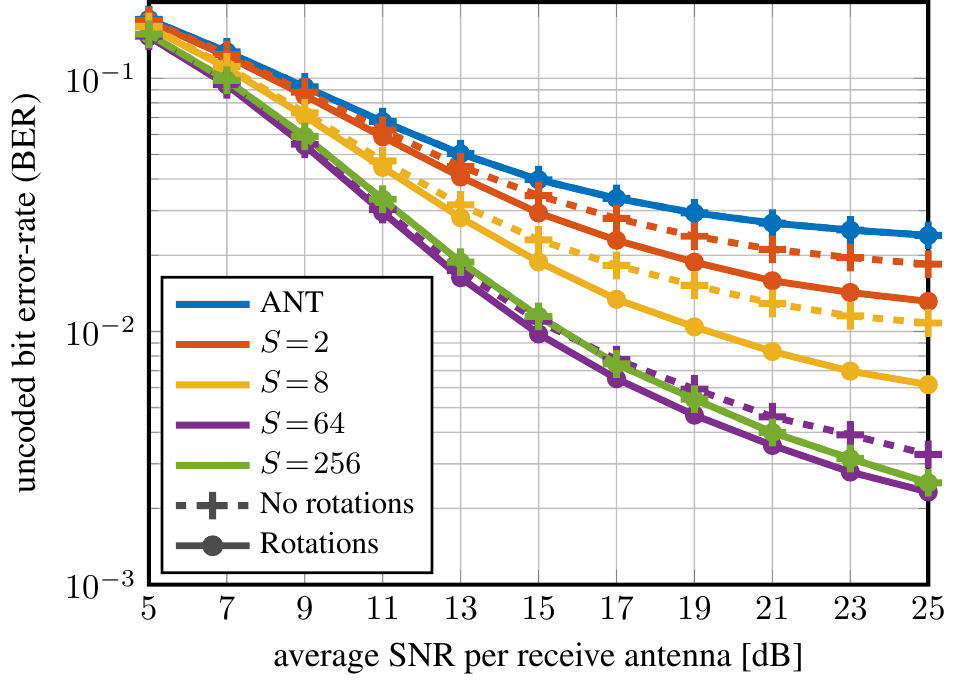}
\caption{Performance comparison between SNIPS with uniformly-strided cluster rotations and SNIPS without cluster rotations for different cluster sizes $S$. The relative jammer power is $\rho=25$\,dB, the ADC resolution is $q=4$ bits per real dimension, and the channels are LoS.}
\label{fig:rotations_vs_norotations}
\vspace{-0.15cm}
\end{figure}
\begin{figure}[t!]
\subfigure[Training set (10$^3$ channels)]{
\hspace{-3mm}
\includegraphics[width=0.5\columnwidth]{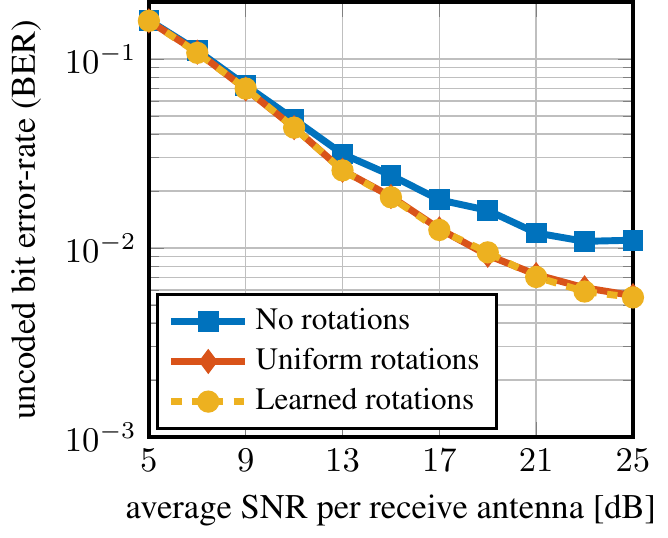}
\label{fig:learned_rotations:train}
}
\hspace{-3mm}
\subfigure[Test set (10$^4$ channels)]{
\includegraphics[width=0.5\columnwidth]{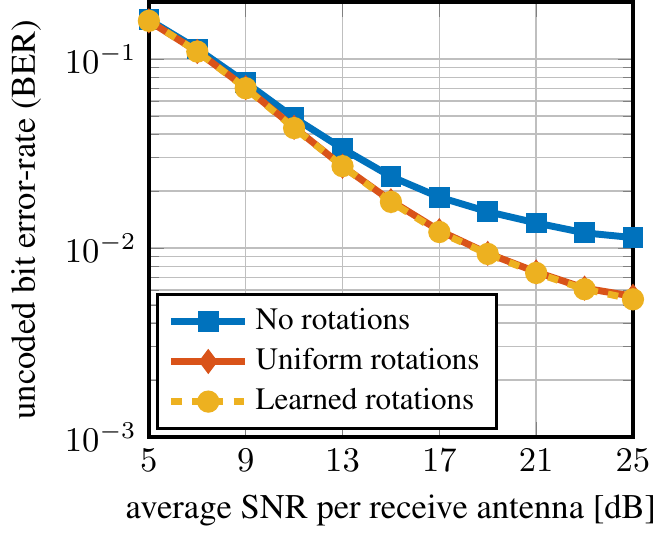}
\label{fig:learned_rotations:test}
\hspace{-3mm}
}
\caption{Performance comparison between SNIPS with uniformly-strided cluster rotations, SNIPS with learned cluster rotations, and SNIPS without cluster rotations. The size of the transform clusters is $S=8$, the relative jammer power is $\rho=25$\,dB, the ADC resolution is $q=4$ bits per real dimension, and the channels are LoS.}
\label{fig:learned_rotations}
\end{figure}

In \fref{fig:rotations_vs_norotations}, we fix $\bT$ to be the DFT, and analyze the performance gain of uniformly-strided cluster rotations $\boldsymbol{\phi}=(2\pi/B)\times[0,1,\dots,C-1]$ as in \eqref{eq:Vc}, over fixed (unrotated) clusters, which corresponds to $\boldsymbol{\phi}=\mathbf{0}$ in \eqref{eq:general_Vc}.
In our analysis, we consider different cluster sizes $S$.
Note that for the antenna domain ($S=1$) and beamspace transform ($S=256$),~both cases (with and without rotations) are mathematically equivalent, so the BER curves coincide.
For all other choices, the inclusion of rotations provides significant performance gains.

For a better understanding of how uniformly-strided cluster rotations $\boldsymbol{\phi}=(2\pi/B)\times[0,1,\dots,C-1]$ increase angular diversity, we make the following remarks:
Each row of the $S$-point DFT $\bF_S$ corresponds to a sampled complex sinusoid which, up to a real-valued scalar factor, can be represented as $f_\omega[s]=e^{-i\omega s}$, $s=0,\ldots,S-1$, where the angular frequency $\omega$ is fixed per row.
Across the rows of $\bF_S$, the angular frequency $\omega$ increases in steps of $2\pi/S$.
So the rows of the full beamspace transform $\bF_B$ are sampled complex sinusoids whose angular frequencies increase in steps of $2\pi/B$.
In \fref{eq:general_Vc}, we multiply $\bF_S$ with a diagonal matrix with entries \mbox{$f_\phi[s]=e^{-i\phi s}$, $s=0,\ldots,S-1$.}
For each row of $\bF_S$, this corresponds to point-wise multiplication of $f_\omega[s]$ and $f_\phi[s]$, resulting in an effective angular frequency of $\omega+\phi$.
By choosing the diagonal entries corresponding to the $c$th cluster as $f_\phi[s]$ with $\phi=(2\pi/B)(c-1)$, we obtain, across all clusters, signals whose angular frequencies increase in steps of $2\pi/B$, as they would for the full beamspace transform $\bF_B$.
We thereby achieve the same angular diversity as the beamspace transform (though not the same sharpness in spatial resolution).

While the choice of uniformly-strided cluster rotations \mbox{$\boldsymbol{\phi}=(2\pi/B)\times[0,1,\dots,C-1]$} comes naturally for the DFT, this does not yet guarantee their optimality.
To gain some insight into the question of optimality, we used a data-driven approach for learning the per-cluster rotations $\boldsymbol{\phi}$.
Specifically, we used a coordinate descent algorithm to determine better rotation angles that proceeds as follows: For cluster $c$, we fix all per-cluster rotations in $\boldsymbol{\phi}$ except for $\phi_c$. 
The angle $\phi_c$ is then swept over a grid of $148$ possible rotation angles between $0$ and $(2\pi/B)C$, evaluating for every grid point the uncoded BER of SNIPS (with these rotation angles $\boldsymbol{\phi}$) at an SNR of $20$\,dB and a relative jammer power $\rho=25$\,dB over a training set consisting of $10^3$ LoS channels (one transmit symbol per UE per channel).
We then fix $\phi_c$ to the rotation angle with the lowest uncoded BER, and the procedure is repeated analogously for the next cluster rotation angle $\phi_{c+1}$.
This coordinate descent process is repeated (for each rotation angle $\phi_c,\,c=1,\dots,C$ in $\boldsymbol{\phi}$)  for a total of $50$ iterations.

\fref{fig:learned_rotations} shows uncoded BER simulation results for the rotation angles learned with this coordinate-descent learning algorithm.
In \fref{fig:learned_rotations:train}, we compare the BER performance of the learned rotation angles with the performance of uniformly strided rotations $\boldsymbol{\phi}=(2\pi/B)\times[0,1,\dots,C-1]$ (and no rotations), considering only the channels contained in the training set.
We see that the learned rotations offer virtually no improvement over uniform rotations.
Since learned rotations do not outperform uniform rotations even on the training set, it would seem unlikely that they offer improvements when evaluated on new channels. This is confirmed in \fref{fig:learned_rotations:test}, where we compare the performance for $10^4$ new LoS channels.

\section{Conclusions}

We have proposed a novel method to mitigate strong jamming attacks in mmWave massive MU-MIMO BSs relying on low-resolution ADCs.
Concretely, we have shown that strong jammers force the quantization range of low-resolution ADCs to drown the UE signals in quantization noise, thereby introducing distortions which are difficult to remove with digital signal processing.
In order to enable effective and practical jammer mitigation in systems that rely on low-resolution ADCs, we have proposed \emph{beam-slicing}, a non-adaptive, distributed analog spatial transform. Beam-slicing exploits the strong directionality of mmWave signals to focus the jammer's energy on a subset of ADCs. An estimate of each ADC-output's fidelity can then be exploited to estimate the UE signals on the basis of the interference-free ADC outputs. 
We have proposed two such estimation methods, SNIPS and CHOPS, which differ in how they cancel the jamming signal. SNIPS performs LMMSE equalization on the beam-sliced signal to soft-null the jammer; CHOPS projects the beam-sliced signal on the orthogonal subspace to remove the jammer completely. Both of these methods leverage Bussgang's decomposition to model quantization artifacts.
We have shown using simulations that beam-slicing significantly improves jammer mitigation already for small transform clusters, with SNIPS and CHOPS clearly outperforming their antenna-domain counterparts. 
This suggests that beam-slicing is a practical method that may be combined with and enhance a variety of digital jammer mitigation methods to provide increased robustness of low-resolution mmWave massive MU-MIMO BSs to jamming attacks.

\appendices
\section{Estimating the Projection matrix} \label{app:projection}

We offer a brief exposition on why \eqref{eq:pest} is a sensible estimate
for the projection matrix
$\bP = \bI_B - \hat\Hj \herm{\hat\Hj} / \|\hat\Hj\|^2$. 
In doing so, we ignore the thermal noise and the quantization noise. 
Under these simplifications, the jammer sequence receive matrix \eqref{eq:jammer_quantiz} can be rewritten as
\begin{align}
	\bR_J = \hat\bmj [\sj_1, \dots, \sj_N] = \hat\bmj \herm{\mathbf{\sj}}. \label{eq:pure_jammer_sequence}
\end{align}

\begin{lem}
If the jammer receive sequence is given as \eqref{eq:pure_jammer_sequence} with $\bR_J\!\neq\mathbf{0}$, and if $\EIf= \bR_J\herm{\bR_J}$, then the orthogonal projection~on-to the orthogonal complement of the subspace spanned by~$\hat\Hj$~is
\begin{align}
	P: \mathbb{C}^B \to \mathbb{C}^B, \bmx \mapsto \textnormal{\Pest}\bmx
\end{align}
with
\begin{align}
	\textnormal{\Pest} = \bI_B - \frac{\EIf}{\tr{(\EIf)}}.
\end{align}
\end{lem}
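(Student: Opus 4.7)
My plan is to substitute the explicit form of $\bR_J$ into the definition of $\EIf$ and simplify; the claim then reduces to the standard formula for projection onto the orthogonal complement of a one-dimensional subspace. The only subtlety is checking that the denominator $\tr(\EIf)$ is nonzero, which follows from the hypothesis $\bR_J\neq\mathbf{0}$.

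First, I would compute $\EIf$ in closed form. Using $\bR_J = \hat\bmj\herm{\mathbf{\sj}}$, we have
\begin{align*}
\EIf = \bR_J\herm{\bR_J} = \hat\bmj\herm{\mathbf{\sj}}\mathbf{\sj}\herm{\hat\bmj} = \|\mathbf{\sj}\|_2^2\,\hat\bmj\herm{\hat\bmj}.
\end{align*}
Next, using the cyclic property of the trace together with $\tr(\hat\bmj\herm{\hat\bmj}) = \herm{\hat\bmj}\hat\bmj = \|\hat\bmj\|_2^2$, I would obtain
\begin{align*}
\tr(\EIf) = \|\mathbf{\sj}\|_2^2\,\|\hat\bmj\|_2^2.
\end{align*}
The hypothesis $\bR_J\neq\mathbf{0}$ implies that neither $\hat\bmj$ nor $\mathbf{\sj}$ is the zero vector, so $\tr(\EIf)>0$ and the ratio $\EIf/\tr(\EIf)$ is well-defined.

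Substituting these two computations into the definition of $\Pest$ gives
\begin{align*}
\Pest = \bI_B - \frac{\|\mathbf{\sj}\|_2^2\,\hat\bmj\herm{\hat\bmj}}{\|\mathbf{\sj}\|_2^2\,\|\hat\bmj\|_2^2} = \bI_B - \frac{\hat\bmj\herm{\hat\bmj}}{\|\hat\bmj\|_2^2},
\end{align*}
which is exactly the orthogonal projection matrix $\bP$ onto the orthogonal complement of $\mathrm{span}(\hat\bmj)$ (cf.\ \cite[Sec.\,2.6.1]{GV96}). Concretely, one verifies $\Pest \hat\bmj = \bmzero$ and $\Pest\bmx = \bmx$ for every $\bmx$ with $\herm{\hat\bmj}\bmx=0$, as well as $\Pest=\herm{\Pest}=\Pest^2$, completing the identification of the map $P$.

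I do not expect any significant obstacle here: the argument is a three-line algebraic manipulation combined with the standard rank-one projection formula. The main things to be careful about are the cancellation of the scalar $\|\mathbf{\sj}\|_2^2$ (which requires $\mathbf{\sj}\neq\bmzero$, guaranteed by $\bR_J\neq\mathbf{0}$) and using the Hermitian transpose (not the ordinary transpose) consistently, since $\hat\bmj$ and $\mathbf{\sj}$ are complex-valued.
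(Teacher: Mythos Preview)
Your proposal is correct and follows essentially the same approach as the paper: compute $\EIf=\|\mathbf{\sj}\|^2\hat\bmj\herm{\hat\bmj}$, take the trace to obtain $\|\mathbf{\sj}\|^2\|\hat\bmj\|^2$, cancel, and invoke the standard rank-one projection formula from \cite[Sec.\,2.6.1]{GV96}. Your version is slightly more detailed in that you explicitly justify why $\tr(\EIf)>0$ and verify the defining properties of the projector, but the argument is otherwise identical.
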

\begin{proof}
We have 
\begin{align}
\EIf = \bR_J\herm{\bR_J} = 	\hat\bmj \herm{\mathbf{\sj}} \mathbf{\sj} \herm{\hat\bmj} = \|\mathbf{\sj}\|^2 \hat\bmj \herm{\hat\bmj}
\end{align}
and $\tr{(\EIf)} =  \|\mathbf{\sj}\|^2 \|\hat\bmj\|^2$. From this, the claim follows by recalling \cite[Sec.\,2.6.1]{GV96}.
\end{proof}

\balance

\end{document}